\theoremstyle{plain}
\newtheorem{theorem}{Theorem}[section]
\newtheorem{proposition}[theorem]{Proposition}
\newtheorem{lemma}[theorem]{Lemma}
\newtheorem{corollary}[theorem]{Corollary}
\theoremstyle{definition}
\newtheorem{definition}[theorem]{Definition}
\newtheorem{example}[theorem]{Example}
\newtheorem{remark}[theorem]{Remark}
\def\x{{\mathbf x}}
\def\y{{\mathbf y}}
\def\0{{\mathbf 0}}
\def\1{{\mathbf 1}}
\def\g{{\mathbf g}}
\def\s{{\sigma}}
\def\g{{\gamma}}
\def\M{{\mathcal M}}
\def\F{{\mathbb F}}
\def\D{{\mathbb D}}
\def\0{{\mathbf 0}}
\def\1{{\mathbf 1}}
\def\CC{{\mathcal C}}
\def\Aut{\operatorname{Aut}}
\def\x{{\mathbf x}}
\def\y{{\mathbf y}}
\def\c{{\mathbf c}}
\def\0{{\mathbf 0}}
\def\1{{\mathbf 1}}
\def\g{{\mathbf g}}
\def\s{{\sigma}}
\def\g{{\gamma}}
\def\M{{\mathcal M}}
\def\F{{\mathbb F}}
\def\D{{\mathbb D}}
\def\0{{\mathbf 0}}
\def\1{{\mathbf 1}}
\def\CC{{\mathcal C}}
\def\Aut{\operatorname{Aut}}
\newenvironment{smatrix}{\left(\begin{smallmatrix}}{\end{smallmatrix}\right)}
\begin{document}

	\title[ Self-dual codes over $GF(q)$ with symmetric generator matrices ]
	{Self-dual codes over $GF(q)$ with symmetric generator matrices}

	\author{Jon-Lark Kim${}^{*}$ \and  Whan-Hyuk Choi${}^{**\dag}$  }

	\address{Department of Mathematics, Sogang University, Seoul 04107, Republic of Korea, \ Email: whchoi@kangwon.ac.kr}
	
	\address{Department of Mathematics, Sogang University, Seoul 04107, Republic of Korea, \ Email: jlkim@sogang.ac.kr }
	
	\date{}
	\subjclass[2000]{Primary: 94B05, Secondary: 11T71}
	
	\keywords{symmetric self-dual code, optimal codes, self-dual codes, symmetric generator matrix, quadratic residue codes}

	\thanks{\tiny   ${}^{\dag}$Corresponding author. \\
		The author${}^{*}$ is supported by the
		National Research Foundation of Korea (NRF) grant funded by the
		Korea government (NRF-2019R1A2C1088676).
		The author${}^{**}$ is supported by the
		National Research Foundation of Korea (NRF) grant funded by the
		Korea government (NRF-2019R1I1A1A01057755).}


	\maketitle

\begin{abstract}
	We introduce a consistent and efficient method to construct self-dual codes over $GF(q)$ with symmetric generator matrices from a self-dual code over $GF(q)$ of smaller length where $q \equiv 1 \pmod 4$. Using this method, which is called a `symmetric building-up' construction, we improve the bounds of best-known minimum weights of self-dual codes of lengths up to 40, which have not significantly improved for almost two decades. We focus on a class of self-dual codes, which includes double circulant codes. We obtain 2967 new self-dual codes over $GF(13)$ and $GF(17)$ up to equivalence. Besides, we compute the minimum weights of quadratic residue codes that were not known before. These are: a [20,10,10] QR self-dual code over $GF(23)$, [24,12,12] QR self-dual codes over $GF(29)$ and $GF(41)$, and a [32,16,14] QR self-dual codes over $GF(19)$. They have the highest minimum weights so far.

\end{abstract}
	
	
	\section{Introduction}

	%
	
	The theory of error-correcting code, which was born with the invention of computers, has been an interesting topic of mathematics as well as industry, such as satellites, CD players, and cellular phones. Recently, with the advent of machine learning and artificial intelligence, there have been some studies on the relationship between error-correcting codes and these fields \cite{BeEry2019,Huang2019,Mills2019,Nachmani2018}.
	Especially, self-dual codes have been an important class of linear codes for both practical and theoretical reasons and received an enormous research effort from the beginning of coding theory. Many of the best-known codes are actually self-dual codes. It is well-known that self-dual codes are asymptotically good \cite{MacWilliam1972}. Moreover, self-dual codes also have close connections to other mathematical structures such as designs, lattices,  graph theory, and modular forms \cite{Bannai1999,Conway1999}. Recently, self-dual codes have applications in quantum information theory \cite[Chap. 13]{Nebe2006}.

	On the other hand, coding theorists are interested in finding an {\it optimal} code, which has the best capability to correct as many errors as possible with a given length. The {\it minimum distance} of code is the parameter determining the error-correction capability of a code. In particular, {\it extremal} self-dual codes and {\it maximal distance separable (MDS)} self-dual codes are optimal codes that meet some upper bounds of minimum distance. We refer to \cite{arasu2001,betsumiya2003,Gaborit2003,Gaborit2002,Georgiou,grassl2008,Guenda2012,huffman2005,sok2019,sok2020,Yan2019}. 
	
	As a summary, we present all of the up-to-date results concerning minimum weight bounds and the existence of optimal self-dual codes in Tables \ref{former_result_A}, \ref{former_result_B}, and \ref{former_result_C}. In these tables, the bounds of the highest minimum weight are listed. The superscript `$e$'  indicates the extremal code, and `$*$' indicates the MDS code. The superscript `$o$' indicates there are no extremal or MDS codes, but the minimum distance is proved to be optimal with given parameters. If the bound is not determined yet, we inscribe `$?$' and if there exists no self-dual code, we inscribe `-'. In Tables \ref{former_result_A}, we list best-known Lee distances($d_L$) and Hamming distances($d_H$) of euclidean self-dual codes over $GF(4)$(denoted by $4^{eucl}$) and best-known Hamming distances of hermitian self-dual codes over $GF(4)$(denoted by $4^{herm}$).
	
	Gleason-Pierce-Ward theorem states that self-dual codes over $GF(q)$ have weights divisible by $\delta >1$ only if $q=2,3,4$. This motivates many researchers to study self-dual codes over small fields. Table \ref{former_result_A} gives the updated status of the highest minimum weights of such self-dual codes.
	However, these tables also tell that there remain many unknown bounds. Most cases of length $\le$ 24 are completely known. However, when $5 \le q \le 20$, most highest minimum weights of self-dual codes over $GF(q)$ are not known if length $\ge 24$, as we can see in Table \ref{former_result_B} and Table \ref{former_result_C}. However, in general, many self-dual codes over larger finite fields have better minimum weights than those of self-dual codes over smaller fields. This is the main motivation of this paper. 
	
	We try to improve the bounds of minimum weights by constructing self-dual codes of long length as many as possible. To this end, we investigate the consistent and efficient method to construct self-dual codes. Consequently, we find a construction method of self-dual code over $GF(q)$ having a symmetric generator matrix where $q \equiv 1 \pmod 4$. This method can be regarded as a special case of the well-known `building-up' construction method \cite{Kim2004}. However, the method in this paper has significant differences: we improve the efficiency to find the best self-dual code from a self-dual code of given length and we also focus our concern on one subclass of self-dual codes which have a certain automorphism in their automorphism group. Using this construction method, we obtain 2967 new self-dual codes over $GF(13)$ and $GF(17)$ and improve the lower bounds of best self-dual codes of length up to 40 (Table \ref{our_results1} and \ref{numbers}). We also want to point out that our new construction method includes well-known pure double circulant and bordered double circulant construction; for example, optimal and MDS self-dual codes obtained in \cite{betsumiya2003} and \cite{Gulliver2008} can be obtained equivalently by using our method.
	
	In addition, we construct four new self-dual codes from quadratic residue codes which improve the unknown bound: a [20,10,10] code over $GF(23)$, [24,12,12] codes over $GF(29)$ and $GF(41)$, and [32,16,14] codes over $GF(19)$. We also point out that the quadratic residue code over $GF(13)$ of length 18, which has been reported previously as the optimal self-dual code(\cite{betsumiya2003}), is {\it not} actually a self-dual code. However, since we obtain [18,9,8] self-dual codes over $GF(13)$, the bound of the highest minimum distance of self-dual code over $GF(13)$ of length 18 is turned to 8-9. Our new results are written in bold in Tables \ref{former_result_B}, \ref{former_result_C} and \ref{our_results1}. In particular, the highest minimum distances of our results in Table \ref{our_results1} are all of the self-dual codes having symmetric generator matrices. The number of inequivalent codes we obtain is given in Table \ref{numbers}.

	\begin{table}
		\begin{center}
			\begin{small}
				\begin{tabular}{|>{\centering\arraybackslash}p{0.7cm}||>{\centering\arraybackslash}p{1.3cm}|>{\centering\arraybackslash}p{1.3cm}|>{\centering\arraybackslash}p{1.3cm}|>{\centering\arraybackslash}p{1.3cm}|>{\centering\arraybackslash}p{1.3cm}|>{\centering\arraybackslash}p{1.3cm}|}
					\hline
					\multirow{2}{*}{$n\backslash q$}
					&\multicolumn{2}{c|}{$2$}     &\multirow{2}{*}{3}    &  \multicolumn{2}{c|}{$4^{eucl}$}    & \multirow{2}{*}{$4^{herm}$}    \\

					\cline{2-3}\cline{5-6}
					
					&   type I        &    type II            &           &     $d_L$             &  $d_H$             &  \\
					\hline
					2  &$2^*$ &-                    &- 				   &$2$			&$2^*$   	   &2       	\\\hline
					4  &$2^o$ &- 					&$3^*$			&$2$		&$3^*$		&2	\\\hline
					6  &$2^o$ &-                    &-				 	  &$4$&$3^o$     &4            	\\\hline
					8  &$2^o$ &$4^e$				&$3^e$		&$4$&$4^e$	&4		\\\hline
					10  &$2^o$ &-                    &- 			     &$4$&$4^e$     &4        	\\\hline
					12  &$4^e$ &- 					&$6^e$			&$6$&$6^o$	&4		\\\hline
					14  &$4^e$ &-                    &-				     &$6$&$6^o$      &6          	\\\hline
					16  &$4^e$ &$4^e$				&$6^e$		&$6$&$6^o$		&6		\\\hline
					18  &$4^e$ &-                    &- 			      &$8$&$6-7$       &8         	\\\hline
					20  &$4^e$ &- 					&$6^e$			&$8$&$8^e$		&8		\\\hline
					22  &$6^e$ &-                    &-				      &$8$&$8^e$       &8         	\\\hline
					24  &$6^e$ &$8^e$				&$9^e$					&$?$&$8-10$		&8	\\\hline
					26  &$6^o$ &-                    &- 			     &$?$&$8-10$    &8,10        	\\\hline
					28  &$6^o$ &- 					&$9^e$			 	&$?$&$9-11$		&10		\\\hline
					30  &$6^o$ &-                    &-				    &$?$&$10-12$    &12       	\\\hline
					32  &$8^e$ &$8^e$				&$9^e$				&$?$&$11-12$	&10,12		\\\hline
					34  &$6^o$ &-                    &- 			    &$12$&$10-12$  &10,12            	\\\hline
					36  &$8^e$ &- 					&$12^e$			 	&$?$&$11-14$	&12,14		\\\hline
					38  &$8^e$ &-                    &-				    &$?$&$11-15$  &12,14        	\\\hline
					40  &$8^e$ &$8^e$				&$12^e$				&$?$&$12-16$	&12,14		\\\hline
					
				\end{tabular}
			\end{small}
			\caption{The best-known minimum weights of self-dual codes of length $n$ over $GF(q)$ where $n \le 40$ and $2 \le q \le 4$ \cite{Gaborit2003,grassl2009,haradaWeb,huffman2005}. }
			\label{former_result_A}
		\end{center}
	\end{table}

	\begin{table}
		\begin{center}
			\begin{small}
				\begin{tabular}{|>{\centering\arraybackslash}p{0.7cm}||>{\centering\arraybackslash}p{1.3cm}|>{\centering\arraybackslash}p{1.3cm}|>{\centering\arraybackslash}p{1.3cm}|>{\centering\arraybackslash}p{1.3cm}|>{\centering\arraybackslash}p{1.3cm}|>{\centering\arraybackslash}p{1.3cm}|>{\centering\arraybackslash}p{1.3cm}|>{\centering\arraybackslash}p{1.3cm}|}
					\hline
					$n\backslash q$ &5     &7         &9          &11                     &13              & 17     & 19     \\\hline
					\hline
					2  &$2^*$ &-                      &$2^*$ 	&-                   &$2^*$          &$2^*$  & -      \\\hline
					4  &$2^o$ &$3^*$					&$3^*$	&$3^*$				 &$3^*$        		&$3^*$  &$3^*$  \\\hline
					6  &$4^*$ &-                      &$4^*$	&-                   &$4^*$          &$4^*$  & -      \\\hline
					8  &$4^o$ &$5^*$					&$5^*$	&$5^*$				 &$5^*$        		&$5^*$  &$5^*$  \\\hline
					10 &$4^o$ &-                      &$6^*$	&-                   &$6^*$          &$6^*$  & -      \\\hline
					12 &$6^o$ &$6^o$					&$6^o$	&$7^*$				 &$6^o$           	&$7^*$  &$7^*$  \\\hline
					14 &$6^o$ &-                      &$6-7$	&-                   &$8^*$             &$7-8$  & -      \\\hline
					16 &$7^o$ &$7-8$					&$8^o$	&$8^o$				 & $8^o$       		 &$8-9$  				&$8-9$  \\\hline
					18 &$7^o$  &-                      &$8-9$	& -                  &$\mathbf{8-9}$     &$10^*$ 				& -      \\\hline
					20 &$8^o$  &$9-10$					&$10^o$	&$10^o$				 &$10^o$        	 &$10^o$ 				&$11^*$ \\\hline	
					22 &$8^o$  &-                      &$9-11$	& -                  &$10-11$        	 &$10-11$				& -      \\\hline
					24 &$9-10$&$9-11$					&$10-11$&$9-12$			     &$10-12$   		 &$10-12$				&$10-12$     \\\hline
					26 &$9-10$&-                      &$10-12$	& -                  & $\mathbf{10-13}$  &$\mathbf{10-13}$         & -      \\\hline
					28 &$10-11$&$11-13$					&$12-13$&$10-14$ 			 & $\mathbf{11-14}$  &$\mathbf{11-14}$      &$11-14$     \\\hline
					30 &$10-12$&-                     &$12-14$	& -                  & $\mathbf{11-15}$  &$\mathbf{12-15}$        & -      \\\hline
					32 &$11-13$&$13-14$					&$12-15$&?                   & $\mathbf{12-16}$  &$\mathbf{12-16}$        &$\mathbf{14-16}$       \\\hline
					34 &$11-14$&-                      &$12-16$	&-                   & $\mathbf{12-17}$  &$\mathbf{13-17}$        &-       \\\hline
					36 &$12-15$&$13-17$					&$13-17$&?                   & $\mathbf{13-18}$  &$\mathbf{13-18}$        &?       \\\hline
					38 &$12-16$&-                      &$14-18$&-                    & $\mathbf{13-19}$  &$\mathbf{14-19}$        &-       \\\hline
					40 &$13-17$&$13-18$					&$14-18$&?                   & $\mathbf{14-20}$  &$\mathbf{14-20}$        &?       \\\hline
				\end{tabular}
			\end{small}
			\caption{The best-known minimum weights of self-dual codes of length $n$ over $GF(q)$ where $n \le 40$ and $5 \le q \le 19$ \cite{betsumiya2003,deboer1996,Gaborit2003,grassl2008,grassl2009,Han2008,harada2003,leon1982,Shi2018}. New results from this article written in bold.}
			\label{former_result_B}
		\end{center}
	\end{table}

	\begin{table}
		\begin{center}
			\begin{small}
				\begin{tabular}{|>{\centering\arraybackslash}p{0.7cm}||>{\centering\arraybackslash}p{1.3cm}|>{\centering\arraybackslash}p{1.3cm}|>{\centering\arraybackslash}p{1.3cm}|>{\centering\arraybackslash}p{1.3cm}|>{\centering\arraybackslash}p{1.3cm}|>{\centering\arraybackslash}p{1.3cm}|>{\centering\arraybackslash}p{1.3cm}|}
					\hline
					$n\backslash q$&23                 &25                     &27                 &29               &31                     &37          & 41        \\\hline
					\hline
					2 &-                     &$2^*$   				&-                  &$2^*$ 				 	&-                      &$2^*$       &$2^*$         \\\hline
					4&$3^*$					&$3^*$   				&$3^*$ 				&$3^*$   				&$3^*$  				 &$3^*$       &$3^*$         \\\hline
					6  &-                       &$3^*$   				&-                  &$4^*$  				&-                      &$4^*$       &$4^*$         \\\hline
					8  &$5^*$ 					&$5^*$   				&$5^*$ 				&$5^*$  				&$5^*$  				 &$5^*$       &$5^*$         \\\hline
					10&-                     &$6^*$   			 	&-                  &$6^*$  				&-                      &$6^*$       &$6^*$         \\\hline
					12 &$7^*$ 					&$7^*$   				&$7^*$ 				&$7^*$   				&$7^*$ 				 	 &$7^*$        &$7^*$         \\\hline
					14 &-                      &$8^*$   				&-                  &$8^*$   				&-                       &$8^*$       &$8^*$         \\\hline
					16&$9^*$ 					&$9^*$   				&$9^*$ 				&$9^*$  				&$9^*$ 			  	 &$9^*$  	&$9^*$   	\\\hline
					18 	&-                      &$10^*$  				&-	                &$10^*$  				&-                    &$10^*$   &$10^*$   \\\hline
					20 	&$\mathbf{10-11}$  		&$11^*$ 				&?                  &$10-11$   				&$11^*$   				 &?          &$11^*$        \\\hline	
					22 &-                     &?       				&-                  &?                      &-                      &?         &$12^*$        \\\hline
					24 	&$13^*$  				&$12-13$ 				&?             		&$\mathbf{12-13}$       &$13^*$    				&?        &$\mathbf{12-13}$     \\\hline
					26  &-                      &$14^*$ 				&-                  &?                      &-                      &     $14^*$        &?                          \\\hline
					28 &$11-14$ 				&?              		&$15^*$ 			&$14-15$ 				&?          				&?        &?             \\\hline
					30&-                     &?                      &-                  &$16^*$     			&-                      &?           &?                          \\\hline
					32 &?                 &?                      &?                  &?                      &$17^*$				 &?          &$17^*$          \\\hline
					34 &-                      &?                      &-                  &?                      &-                      &?           &?                          \\\hline
					36 &?                      &?                      &?                  &?                      & ?                     &$18-19$ 		&?                          \\\hline
					38 	&-                      &?                      &-                  &?                      &-                      &$20^*$     &?                          \\\hline
					40 &?                      &?                      &?                  &?                      & ?                     &?          &$20-21$ \\\hline
				\end{tabular}
			\end{small}
			\caption{The best-known minimum weights of self-dual codes of length $n$ over $GF(q)$ where $n \le 40$ and $23 \le q \le 41$\cite{betsumiya2003,deboer1996,Georgiou,grassl2008,Gulliver2008,gulliver2010,Kim2004,Shi2018,sok2019,sok2020}. New results from this article written in bold. }
			\label{former_result_C}
		\end{center}	
	\end{table}


	
	\begin{table}
		\begin{center}
			\begin{small}
				\begin{tabular}{|c||c|c|c|c|}
					\hline
					\multirow{2}{*}{$n$} &	      \multicolumn{2}{c|}{Over $GF(13)$}   &    \multicolumn{2}{c|}{Over $GF(17)$}     \\
					\cline{2-5}
					&      Our results              &       Prev. best             &    Our results              &       Prev. best    \\
					\hline
					2  &$2$ 	& 2             &2 			&2	          	\\\hline
					4  &$3$ 	& 3 			&3			&3			\\\hline
					6  &$4$ 	& 4             &4			&4 	              	\\\hline
					8  &$5$ 	& 5				&5			&5				\\\hline
					10  &$6$ 	& 6             &6 			&6            	\\\hline
					12  &$6$ 	& 6				&7			&7 				\\\hline
					14  &$8$ 	& 8             &7			&7              	\\\hline
					16  &$8$ 	&8				&8			&8				\\\hline
					18  &$8$	 & 9?           &10 			&10              	\\\hline
					20  &$10$ 	& 10 			&9			&10				\\\hline
					22  &$10$ 	& 10            &10			&10             	\\\hline
					24  &$10$	&10             &10			&10			\\ \hline
					26  &$\mathbf{10}$ &-       &$\mathbf{10}$			&-            	\\\hline
					28  &$\mathbf{11}$ & 10		&$\mathbf{11}$			&10		\\\hline
					30  &$\mathbf{11}$ &-       &$\mathbf{12}$			&-          	\\\hline
					32  &$\mathbf{12}$ &-		&$\mathbf{12}$			&-			\\\hline
					34  &$\mathbf{12}$ &-       &$\mathbf{12}$			&-               	\\\hline
					36  &$\mathbf{13}$ &- 		&$\mathbf{13}$			&-			\\\hline
					38  &$\mathbf{13}$ &-       &$\mathbf{14}$			&-           	\\\hline
					40  &$\mathbf{14}$ & -		&$\mathbf{14}$			&-			\\\hline
					
				\end{tabular}
				\caption{Highest minimum weights of self-dual codes constructed by Theorem \ref{Building up} vs. previously known highest minimum weights. New results are written in bold.}
				\label{our_results1}
			\end{small}
		\end{center}	
	\end{table}

	\begin{table}
		\begin{center}
			\begin{small}
				\begin{tabular}{|c||c|c|c|c|}
					\hline
					\multirow{2}{*}{$n$} &	      \multicolumn{2}{c|}{Over $GF(13)$}   &    \multicolumn{2}{c|}{Over $GF(17)$}     \\
					\cline{2-5}
					& min. wt.         & \# of codes   & min. wt.  & \# of codes   \\
					\hline
					26  &10 &     $\ge 1098	$		&10			&$\ge 352$           	\\\hline
					28  &11 & $\ge 1$		&11			&$\ge 106$	\\\hline
					30  &11 &   $\ge 380$    &12			&$\ge 2$          	\\\hline
					32  &12 &$\ge 164$		&12			&$\ge 2$			\\\hline
					34  &12 &$\ge 710$       &12			&$\ge 2$               	\\\hline
					36  &13 &$\ge 7$	&13			&$\ge 64$			\\\hline
					38  &13 &$\ge 66$       &14			&$\ge 2$          	\\\hline
					40  &14 & $\ge 4$	&14			&$\ge 7$			\\\hline
					
				\end{tabular}
				\caption{Number of inequivalent self-dual codes newly obtained by using construction method of Theorem \ref{Building up} }
				\label{numbers}
			\end{small}
		\end{center}	
	\end{table}

	The paper is organized as follows. Section 2 gives preliminaries and background for self-dual codes over $GF(q)$.
	In Section 3, we present a construction method of {\it symmetric self-dual codes} over $GF(q)$ where $q \equiv 1 \pmod 4$. We show that every symmetric self-dual code of length $2n+2$ is constructed from a symmetric self-dual code of length $2n$ up to equivalence by using this construction method. In Section 4, we present the computational results of the best codes obtained by using our method.
	All computations in this paper were done with the computer algebra system \textsc{Magma} \cite{Magma}.

	\section{Preliminaries}

	Let $n$ be a positive integer and $q$ be a power of a prime. A {\it linear code} $\CC$ of length $n$ and dimension $k$ over a finite field $GF(q)$ is a $k$-dimensional subspace of $GF(q)^n$. An element of $\CC$ is called a {\it codeword}. A {\it generator matrix} of $\CC$ is a matrix whose rows form a basis of $\CC$. 
	For vectors $\x = (x_i )$ and $\y=(y_i)$, we define the inner product $\x \cdot \y = \sum_{i=1}^{n} x_i y_i$. The {\it dual code $\CC^{\perp}$} is defined by $$\CC^{\perp}=\{\x\in GF(q)^n \mid \x \cdot \c=0 \text{ for all $\c \in C$} \}.$$  A linear code $\CC$ is called {\it self-dual} if $\CC = \CC^{\perp}$ and {\it self-orthogonal} if $\CC \subset \CC^{\perp}$. 
	
	The {\it weight} of a codeword $\c$ is the number of non-zero symbols in the codeword and denoted by $wt(\c)$. The {\it Hamming distance} between two codewords $\x$ and $\y$ is defined by $d(\x,\y)=wt(\x-\y)$. The {\it minimum distance} of $\CC$, denoted by $d(\CC)$, is the smallest Hamming distance between distinct codewords in $\CC$. The error-capability of a code is determined by the minimum distance, thus the minimum distance is the most important parameter of a code. For linear codes, the minimum distance equals the minimum weight of the non-zero codewords. It is well-known \cite[chapter 2.4.]{HP3} that a linear code of length $n$ and dimension $k$ satisfy the Singleton bound, $$d(\CC) \le n - k +1.$$ A code that achieves the equality in Singleton bound is called a \textit{maximum distance separable(MDS)} code. A self-dual code of length $2n$ over a field is MDS if the minimum weight equals $n+1$.  
	
	Let $S_n$ be a symmetric group of order $n$ and $\D^n$ be the set of diagonal matrices over $GF(q)$ of order $n$, $$ \D^n=\{diag(\g_i) \mid  \g_i\in GF(q), \g_i^2=1\}.$$ The group of all  \textit{$\gamma$-monomial transformations of length $n$},  $\M^n$  is defined by
	$$\M^n=\{p_{\s} \gamma \mid \gamma \in\D^n, \s \in S_n\}
	$$ where $p_{\sigma}$ is the permutation matrix corresponding $\sigma \in S_n$. 
	We only consider $\gamma$-monomial transformation in this paper since $\gamma$-monomial transformation does preserve the self-duality(see \cite[Thm 1.7.6]{HP3}). Let $\CC\tau =\{\c \tau \mid \c \in \CC\}$ for an element $\tau$ in $\M^{2n}$ and a code $\CC$ of length $2n$. If there exists an element $\mu \in \M^{2n}$ such that $\CC \mu=\CC'$ for two distinct self-dual codes $\CC$ and $\CC'$, then $\CC$ and $\CC'$ are called \textit{equivalent} and denoted by $\CC \simeq \CC'$ . An {\it automorphism} of $\CC$ is an element $\mu \in \M^{2n}$ satisfying $\CC \mu=\CC$. The set of all automorphisms of $\CC$ forms the {\it automorphism group} $\Aut(\CC)$ as a subgroup of $\M^{2n}$.

	Let $A^T$ denote the transpose of a matrix $A$. A self-dual code $\CC$ of length $2n$ over $GF(q)$ is equivalent to a code with a standard generator matrix
	\begin{equation}\label{std-form}
		\left(
		\begin{array}{c|c}
			I_{n}& A
		\end{array}
		\right),
	\end{equation} where $A$ is a $n \times n$ matrix satisfying $AA^T=-I_{n}$. 
	
	%
	%
	%
	%
	%
	%


	\begin{proposition}\label{transpose}
		Let $\CC$ be a self-dual code of length $2n$ over $GF(q)$ with a standard generator matrix
		$G=(    I_n \mid  A ).$ Then $$A^TG =(A^T\mid  -I_n )$$ is also a generator matrix of $\CC$. 
		
		\begin{proof}
			Since $\CC$ is self-dual, $AA^T=-I$ and $A^{-1} =-A^T$. Thus $A^T$ is non-singular. This implies that the rows of matrix $A^TG$ form a basis of the code $\CC$ and $$A^TG =(    A^T I_n \mid  A^TA ) = (    A^T  \mid  -I_n).$$ \end{proof}    
	\end{proposition}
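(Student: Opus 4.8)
The plan is to show that $A^T G$ is a generator matrix of $\CC$ by verifying three things in turn: that $A^T$ is invertible, that the rows of $A^T G$ therefore span exactly $\CC$, and finally that the explicit block product $A^T G$ collapses to the claimed form $(A^T \mid -I_n)$.

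First I would extract the consequences of self-duality. The relation $AA^T = -I_n$ recorded just before the proposition is the key input. Taking determinants gives $\det(A)^2 = \det(AA^T) = \det(-I_n) = (-1)^n$, which is nonzero in $GF(q)$ (here $q$ is odd, since $q \equiv 1 \pmod 4$), so $A$ — and hence $A^T$ — is non-singular. Multiplying $AA^T = -I_n$ on the left by $A^{-1}$ then yields $A^T = -A^{-1}$, equivalently $A^{-1} = -A^T$, which is the identity I will use twice below.

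Next I would settle the row-space claim. Because $G$ generates $\CC$, every row of $A^T G$ is a $GF(q)$-linear combination of rows of $G$, so the row space of $A^T G$ is contained in $\CC$. Since $A^T$ is non-singular, left-multiplication by $A^T$ does not change the row space; hence the rows of $A^T G$ span the same $n$-dimensional space as the rows of $G$, namely $\CC$, and therefore form a basis of $\CC$. This is precisely the assertion that $A^T G$ is a generator matrix of $\CC$.

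Finally I would carry out the block multiplication: $A^T G = A^T(I_n \mid A) = (A^T \mid A^T A)$. The remaining point is to identify the second block, and here I substitute $A^{-1} = -A^T$ to get $A^T A = -A^{-1}A = -I_n$, so that $A^T G = (A^T \mid -I_n)$ as claimed. I do not anticipate any genuine obstacle; the one point needing care is the invertibility of $A$, which is what both preserves the row space and allows the passage from $AA^T = -I_n$ to $A^T A = -I_n$. Crucially this relies on $A$ being square, after which the rest is routine block arithmetic.
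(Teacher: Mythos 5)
Your proof is correct and follows essentially the same route as the paper: derive $A^{-1}=-A^{T}$ from $AA^{T}=-I_n$, use the non-singularity of $A^{T}$ to preserve the row space, and compute $A^{T}G=(A^{T}\mid A^{T}A)=(A^{T}\mid -I_n)$. Your extra determinant step and the aside that $q$ is odd are harmless but unnecessary, since $AA^{T}=-I_n$ already exhibits $-A^{T}$ as the inverse of $A$ over any field (and $(-1)^n\neq 0$ in every characteristic).
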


	\begin{corollary}\label{tranpose2}
		Let $G=(    I_n \mid  A )$  and $G'=(    I_n \mid  A^T )$ be generator matrices of self-dual codes $\CC$ and $\CC'$, respectively. Then $\CC$ and $\CC'$ are equivalent.
		\begin{proof}
			By the Proposition \ref{transpose}, it is clear that $G'$ is equal to $G p_{\tau_1} \gamma_1 $ for $\tau_1 = (1,n+1)(2,n+2)\cdots(n,2n) \in S_{2n}$ and $\gamma_1 = diag(-\1_n,\1_n)\in \D^{2n}$ where $\1_n$ denotes all one vector of length $n$. 
		\end{proof}    
	\end{corollary}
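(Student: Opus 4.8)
The plan is to build the required equivalence explicitly, using the alternate generator matrix supplied by Proposition \ref{transpose} rather than working with $G$ itself. First I would invoke Proposition \ref{transpose} to record that $A^T G = (\,A^T \mid -I_n\,)$ is itself a generator matrix of $\CC$. This is the crucial input: it already exhibits $A^T$ in the left block, which is exactly where it must sit in $G' = (\,I_n \mid A^T\,)$. The problem then reduces to transporting $(\,A^T \mid -I_n\,)$ to $(\,I_n \mid A^T\,)$ by a single $\gamma$-monomial transformation, and I expect this to give an \emph{exact} matrix identity, avoiding any extra row reduction.

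Next I would exhibit that transformation in two stages. Let $\tau_1 = (1,n+1)(2,n+2)\cdots(n,2n) \in S_{2n}$ be the involution interchanging column $i$ with column $n+i$ for each $i$, and let $\gamma_1 = \diag(-\1_n,\1_n) \in \D^{2n}$. Applying $p_{\tau_1}$ to $(\,A^T \mid -I_n\,)$ swaps the two $n$-column blocks, yielding $(\,-I_n \mid A^T\,)$; then applying $\gamma_1$ negates the first $n$ coordinates, producing $(\,I_n \mid A^T\,) = G'$. Hence $\mu = p_{\tau_1}\gamma_1 \in \M^{2n}$ sends a generator matrix of $\CC$ onto $G'$, so $\CC\mu = \CC'$ and $\CC \simeq \CC'$ by definition. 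I would also note explicitly that $\mu$ is a $\gamma$-monomial transformation, so that the resulting equivalence is of the admissible type that preserves self-duality.

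The points requiring care here are bookkeeping rather than genuine obstacles, since the statement is a direct corollary. I must make sure the monomial transformation acts on columns in the correct order—permute first, then scale—and that the signs in $\gamma_1$ fall on the correct block, so that the intermediate matrix is $(\,-I_n \mid A^T\,)$ and not some other normalization. It is worth emphasizing why the alternate generator $A^T G$ is the right starting point: applying $p_{\tau_1}\gamma_1$ to $G = (\,I_n \mid A\,)$ itself produces $(\,-A \mid I_n\,)$, which equals $G'$ only after left-multiplication by $-A^{-1}$ together with the identity $A^{-1} = -A^T$ from self-duality, whereas starting from $A^T G$ lands on $G'$ on the nose. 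Finally, no nondegeneracy issue arises, as $A$ is invertible by the self-duality of $\CC$, exactly as used in Proposition \ref{transpose}.
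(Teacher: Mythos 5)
Your proposal is correct and takes essentially the same route as the paper: both invoke Proposition \ref{transpose} to replace $G$ by the alternate generator matrix $(A^T \mid -I_n)$ of $\CC$ and then apply the very same monomial transformation $p_{\tau_1}\gamma_1$ with $\tau_1=(1,n+1)(2,n+2)\cdots(n,2n)$ and $\gamma_1=\diag(-\1_n,\1_n)$. If anything, your write-up is slightly more careful than the paper's one-line proof, since you make explicit that the transformation must be applied to $(A^T\mid -I_n)$ rather than to $G$ itself (where it would yield $(-A\mid I_n)$ and require the extra identity $A^{-1}=-A^T$), which is precisely the point the paper leaves implicit.
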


	\begin{proposition}\label{M-equivalent}
		Let $G=(    I_n \mid  A )$  and $G'=(    I_n \mid  B )$ be generator matrices of self-dual codes $\CC$ and $\CC'$, respectively. If $A = \mu_1 B \mu_2$ for some $\mu_1, \mu_2 \in \M^n$, then $\CC$ and $\CC'$ are equivalent.
		
		\begin{proof}
			For $\mu =\left(\begin{array}{c|c}
			\mu_1^{-1} &O \\
			\hline
			O & \mu_2			\end{array}\right) \in \M^{2n}$,
			$$(     I_n \mid  A )
			=(    I_n   \mid  \mu_1 B \mu_2)=(    \mu_1^{-1}   \mid  B \mu_2)
			= (      I_n \mid  B ) \mu. $$ Thus, $\CC$ and $\CC'$ are equivalent.
		\end{proof}    
	\end{proposition}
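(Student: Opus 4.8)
The plan is to exhibit a single $\gamma$-monomial transformation $\mu\in\M^{2n}$ that carries $\CC'$ onto $\CC$, reading off the required relation directly from the hypothesis $A=\mu_1 B\mu_2$. Since equivalence is \emph{defined} by the existence of such a $\mu$, producing it explicitly will finish the argument; there is no need to manipulate individual codewords.

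First I would form the block-diagonal matrix
\[
\mu=\left(\begin{array}{c|c}\mu_1^{-1}&O\\ \hline O&\mu_2\end{array}\right),
\]
and check that it is a legitimate element of $\M^{2n}$. Because $\mu_1,\mu_2\in\M^n$ are each a permutation matrix composed with a diagonal $\pm1$ matrix, so are their inverses; placing two such blocks on the diagonal yields a permutation of the $2n$ coordinates that never mixes the first $n$ with the last $n$, composed with a diagonal matrix all of whose entries are $\pm1$. This is exactly the shape required for membership in $\M^{2n}$, and in particular $\mu_1$ is invertible so that $\mu_1^{-1}$ makes sense.

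Next I would compute the right action of $\mu$ on $G'=(I_n\mid B)$, obtaining $(I_n\mid B)\mu=(\mu_1^{-1}\mid B\mu_2)$. Separately, I would note that left-multiplying the generator matrix $(I_n\mid\mu_1 B\mu_2)$ by the invertible matrix $\mu_1^{-1}$ is merely a change of basis and so leaves the row space unchanged; hence $G=(I_n\mid A)=(I_n\mid\mu_1 B\mu_2)$ and $(\mu_1^{-1}\mid B\mu_2)$ generate one and the same code. Chaining these two observations gives $\CC=\CC'\mu$ with $\mu\in\M^{2n}$, which is precisely the statement $\CC\simeq\CC'$.

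The one point that needs care — what I would call the main, if modest, obstacle — is keeping straight the two different roles matrices play here: left multiplication by an invertible matrix is a basis change that preserves the code, whereas the right action by a $\gamma$-monomial matrix permutes and rescales coordinates and is what actually realizes the equivalence. Once these roles are separated, the block structure of $\mu$ does all the work, letting $\mu_1$ be absorbed on the left as a basis change while $\mu_2$ acts on the coordinates on the right, and the computation becomes immediate.
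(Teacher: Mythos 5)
Your proof is correct and follows essentially the same route as the paper: both construct the block-diagonal monomial transformation $\mu=\left(\begin{smallmatrix}\mu_1^{-1}&O\\O&\mu_2\end{smallmatrix}\right)\in\M^{2n}$ and verify $(I_n\mid B)\mu$ generates $\CC$. Your explicit separation of left multiplication (basis change preserving the row space) from the right monomial action is a welcome clarification of a step the paper's chain of equalities leaves implicit, but it is the same argument.
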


	\begin{definition}
		A matrix $A$ is called {\it symmetric} if $A^T=A$. If the matrix $A$ in a standard generator matrix $G=(I_n \mid A)$ of a self-dual code $\CC$ of length $2n$ over $GF(q)$ is symmetric, we call $G$ {\it a symmetric generator matrix of $\CC$}. If a self-dual code $\CC$ has a symmetric generator matrix, we call $\CC$ {\it a symmectric self-dual code}.
	\end{definition}
	
	\begin{definition} {\label{direct sum}}
		
		Let $\CC_1$, $\CC_2$ be self-dual codes of length $2l$ and $2m$ whose standard generator matrices are $(    I_l \mid  A_1 )$ and  $(    I_m \mid A_2 )$, respectively. {\it The direct sum} of two codes, $\CC_1  \oplus \CC_2$ is defined by the code having the generator matrix, 
		\begin{center}
			
			$  (    I_l \mid  A_1 ) \oplus (    I_m \mid  A_2 )= \left(\begin{array}{c|c|c|c}
			I_l &O& A_1 &O \\
			\hline
			O & I_m& O &A_2
			\end{array}\right)$.
		\end{center}
		
	\end{definition}

	\begin{corollary}\label{cor2.6}Let $I_n$ be the identity matrix of order $n$, $A$ is an $n\times n$ circulant matrix, $B$ is an $(n-1)\times (n-1)$ circulant matrix. Then,
		\begin{enumerate}
			\item a pure double circulant code over $GF(q)$ with a generator matrix of the form $$(I_n \mid A)$$ is equivalent to a code with symmetric generator matrix, and
			
			\item a bordered double circulant code  over $GF(q)$ with a generator matrix of the form \[
			\left(
			\begin{array}{ccc}
			& \alpha & \beta \cdots \beta \\ 
			\raisebox{-10pt}{{\large\mbox{{$I_n$}}}} & \beta& \raisebox{-15pt}{{\large\mbox{{$A$}}}} \\[-4ex]
			& \vdots & \\[-0.5ex]
			& \beta &
			\end{array}
			\right),
			\] where $\alpha$ and $\beta$ are elements in $GF(q)$, is equivalent to a code with symmetric generator matrix.
		\end{enumerate}

	\end{corollary}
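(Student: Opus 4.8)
The plan is to reduce both parts to Proposition~\ref{M-equivalent} by exhibiting one permutation that turns a circulant matrix into a symmetric one. The key observation is the following: index the rows and columns of the $n\times n$ circulant $A$ by $\{0,1,\dots,n-1\}$, so that its $(i,j)$-entry is $a_{j-i \bmod n}$, and let $\sigma\in S_n$ be the reversal $\sigma(i)\equiv -i \pmod n$, which is an involution. Then the column-permuted matrix $Ap_{\sigma}$ has $(i,j)$-entry $a_{\sigma(j)-i}=a_{-i-j \bmod n}$; since the subscript $-i-j$ is symmetric in $i$ and $j$, the matrix $S:=Ap_{\sigma}$ is symmetric. (Because $\sigma$ is an involution, the left/right convention for $p_\sigma$ is immaterial here.) Moreover $p_\sigma$ is orthogonal, so $SS^{T}=Ap_\sigma p_\sigma^{T}A^{T}=AA^{T}$, whence $(I_n\mid S)$ generates a self-dual code whenever $(I_n\mid A)$ does.

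First I would settle (1). Writing $A=I_n\cdot S\cdot p_\sigma^{-1}$ with $I_n,\,p_\sigma^{-1}\in\M^{n}$, Proposition~\ref{M-equivalent} gives at once that the pure double circulant code generated by $(I_n\mid A)$ is equivalent to the code generated by the symmetric matrix $(I_n\mid S)$. Since equivalence preserves self-duality, the latter is a symmetric self-dual code, proving (1). The only point needing care is the identity $\sigma(j)-i\equiv\sigma(i)-j \pmod n$, which holds because both sides equal $-(i+j)$.

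For (2) I would run the same argument on the circulant block while fixing the border. Let $M=\begin{pmatrix}\alpha & \beta\1^{T}\\ \beta\1 & B\end{pmatrix}$ be the right-hand $n\times n$ block, where the $(n-1)\times(n-1)$ circulant $B$ occupies the last $n-1$ rows and columns. Take $\rho\in S_n$ fixing the border coordinate and acting on the remaining $n-1$ coordinates as the reversal above. Right multiplication by $p_\rho$ permutes only the last $n-1$ columns: the corner $\alpha$ and the first column $\beta\1$ are untouched, the first row's entries $\beta$ are permuted among themselves and so remain a constant row of $\beta$'s, and the bottom-right block becomes $B$ with reversed columns, hence symmetric by the computation above. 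Thus $Mp_\rho$ is symmetric, and again $Mp_\rho(Mp_\rho)^{T}=MM^{T}$, so $(I_n\mid Mp_\rho)$ is a symmetric generator matrix of a self-dual code equivalent to the bordered double circulant code by Proposition~\ref{M-equivalent}.

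The main (essentially the only) obstacle is the bookkeeping in the bordered case: one must choose the reversal so that the border index is genuinely a fixed point, ensuring the symmetric border pattern ($\alpha$ on the diagonal, a constant $\beta$ along the first row and column) survives while the interior circulant is symmetrized. Once the indexing is arranged this way, both parts follow from the single identity $a_{-i-j}=a_{-j-i}$ together with Proposition~\ref{M-equivalent}.
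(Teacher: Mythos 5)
Your proof is correct and takes essentially the same route as the paper: the paper's entire argument is the observation that a column-reversed circulant matrix is symmetric, combined with Proposition \ref{M-equivalent}, which is exactly your reversal-permutation computation $a_{\sigma(j)-i}=a_{-i-j}$ followed by writing $A=S\,p_\sigma^{-1}$. Your choice of reversal $\sigma(i)\equiv -i\pmod n$ rather than the full column reversal is an immaterial variant, and your careful handling of the fixed border coordinate in part (2) merely makes explicit the bookkeeping the paper leaves to the reader.
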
 
	\begin{proof}
		It is clear that a column reversed matrix of a circulant matrix $A$ is symmetric. Thus, the corollary follows directly from Proposition \ref{M-equivalent}.
	\end{proof}

	We remark that many MDS and optimal self-dual codes are obtained by using the construction method of pure double circulant codes and bordered double circulant codes in \cite{betsumiya2003,Gulliver2008}. These codes are all equivalent to codes with symmetric generator matrices.

	\section{Construction of symmetric self-dual codes}
	In this section, we introduce a construction method for symmetric self-dual codes over $GF(q)$ where $q \equiv 1 \pmod 4$. We also show that any symmetric self-dual code of length $2n+2$ is obtained from a symmetric self-dual code of length $2n$ by using this method. Thus, this is a complete method to obtain all symmetric self-dual codes. Our construction requires a square root of -1 in $GF(q)$; it is well-known that the equation $x^2 = -1$ has roots in $GF(q)$ if and only if $q \equiv 1 \pmod 4$. Thus, from now on, we assume that $q$ is a power of an odd prime such that $q \equiv 1 \pmod 4$. 
	
	\begin{lemma}\label{eigenspace}
		Let $\alpha$ be a root of -1 in $GF(q)$. If $\CC$ is a self-dual code of length $2n$ over $GF(q)$ with symmetric generator matrix
		$G=(    I_n \mid  A )$, then $A$ has an eigenvector $\x^T$ with eigenvalue $\alpha$ or $-\alpha$. 
		
		\begin{proof}
			Since $\CC$ is self-dual, $AA^T=-I$. With the assumption that $A$ is symmetric, we have that $A^2=-I$, and $$(A-\alpha I)(A + \alpha I)=A^2 +I =-I + I =O.$$ This implies that any non-zero vector $\x^T$ generated by column vectors of $A + \alpha I$, is an eigenvector of $A$ with eigenvalue $\alpha$ if $A \ne -\alpha I$. On the contrary, if $A = -\alpha I$, then it is obvious that any vector $\x^T$ in $GF(q)^n$ is an eigenvector of $A$ with eigenvalue $-\alpha$. Thus, the result follows.\end{proof}    
	\end{lemma}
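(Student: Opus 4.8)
The plan is to reduce everything to the single algebraic identity $A^2 = -I$ and then exploit the factorization of $x^2+1$ over $GF(q)$. First I would combine the two hypotheses on $A$: self-duality of $\CC$ forces $AA^T = -I$ (this is the defining relation for the standard form recorded just after \eqref{std-form}), while the assumption that $G$ is a symmetric generator matrix gives $A^T = A$. Substituting the latter into the former yields $A^2 = -I$, which is the only structural fact about $A$ that I will need.

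Next I would bring in the root $\alpha$. Since $\alpha^2 = -1$, the polynomial $x^2+1$ splits over $GF(q)$ as $(x-\alpha)(x+\alpha)$, and evaluating this factorization at the matrix $A$ gives
\[
(A-\alpha I)(A+\alpha I) = A^2 - \alpha^2 I = A^2 + I = O.
\]
This identity says precisely that the column space of $A+\alpha I$ is annihilated by $A-\alpha I$, and it is the engine of the whole argument.

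The final step is a case distinction according to whether $A+\alpha I$ vanishes. If $A+\alpha I = O$, then $A=-\alpha I$ and \emph{every} nonzero vector $\x^T \in GF(q)^n$ satisfies $A\x^T = -\alpha\x^T$, an eigenvector for $-\alpha$. Otherwise $A+\alpha I$ has a nonzero column, which I take as $\x^T$; by the displayed identity $(A-\alpha I)\x^T = O$, so $A\x^T = \alpha\x^T$ and $\x^T$ is an eigenvector for $\alpha$. In either case the conclusion holds.

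I do not expect a genuine obstacle here: the content is entirely linear-algebraic once $A^2=-I$ is in hand. The only point demanding a moment of care is the production of a \emph{nonzero} eigenvector, that is, not overlooking the degenerate possibility $A=-\alpha I$; the case split above is designed precisely to handle it. It is also worth emphasizing that the hypothesis $q\equiv 1 \pmod 4$ is essential at exactly one place, namely the second step, since it is what guarantees the root $\alpha$ and hence the splitting of $x^2+1$ used there.
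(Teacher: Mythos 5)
Your proposal is correct and follows essentially the same route as the paper's proof: both derive $A^2=-I$ from $AA^T=-I$ together with $A^T=A$, factor $(A-\alpha I)(A+\alpha I)=O$, and split into the cases $A=-\alpha I$ (every nonzero vector is an eigenvector for $-\alpha$) versus $A\ne -\alpha I$ (a nonzero column of $A+\alpha I$ is an eigenvector for $\alpha$). Your closing remark about $q\equiv 1\pmod 4$ correctly identifies where the existence of $\alpha$ enters, matching the standing assumption the paper makes at the start of Section 3.
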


	\begin{theorem}{\label{Building up}}
		Let 
		$(I_n \mid A)$
		be a generator matrix of symmetric self-dual code of length $2n$ over $GF(q)$ for $q \equiv 1 \pmod 4$. Let $\alpha$ be a square root of -1 and $\x^T$ be an non-zero eigenvector of $A$ corresponding eigenvalue $\alpha$, where $\x \x^T +1$ is a non-zero square in $GF(q)$. Take $\gamma$ be an element of $GF(q)$ satisfying $\gamma^2 = -1 - \x \x^T$ and $\gamma \ne \alpha$. And let $\beta=(\gamma-\alpha)^{-1}$ and $E= \beta \x^T \x$. 
		Then
		$$
		G'=	(I_{n+1} \mid A')=\left(\begin{array}{c|c|c|c}
		1 & O& \gamma & \x \\
		\hline
		
		O& I_n&\x^T& A+E\\
	\end{array}\right)$$
	is a generator matrix of symmetric self-dual code of length $2n+2$. In particular, if we take $\x$ a zero vector, then,
	$$
	G'=	(1 \mid \alpha)\oplus(I_{n} \mid A) =\left(\begin{array}{c|c|c|c}
		1 & O& \alpha & O \\
		\hline
		
		O& I_n&O& A\\
	\end{array}\right)$$
	is a generator matrix of symmetric self-dual code of length $2n+2$ with minimum weight two.
	\end{theorem}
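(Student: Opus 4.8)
The plan is to verify the two defining properties of a symmetric self-dual code directly: that $A'$ is symmetric, and that the code generated by $(I_{n+1}\mid A')$ is self-dual, i.e. that $A'(A')^T=-I_{n+1}$. Symmetry is the easy half. By hypothesis $A^T=A$, while $E=\beta\,\x^T\x$ is symmetric because $\beta$ is a scalar and $(\x^T\x)^T=\x^T\x$; the corner entry $\gamma$ is a scalar and the two off-diagonal blocks $\x$ and $\x^T$ are transposes of one another. Hence $A'^T=A'$, and it then suffices to prove $A'^2=-I_{n+1}$, for then $A'(A')^T=A'^2=-I_{n+1}$. (I note in passing that the hypothesis that $\x\x^T+1$ be a nonzero square is exactly what makes such a $\gamma$ available: since $-1=\alpha^2$ is a square, $-(1+\x\x^T)$ is a square, and nonzeroness lets one choose the sign of its root so that $\gamma\neq\alpha$, which is needed for $\beta=(\gamma-\alpha)^{-1}$ to be defined.)

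Before expanding $A'^2$ blockwise I would record the identities used throughout. Write $s:=\x\x^T\in GF(q)$. From $A\x^T=\alpha\,\x^T$ and the symmetry of $A$ we get $\x A=(A\x^T)^T=\alpha\,\x$; from $AA^T=-I_n$ and $A^T=A$ we get $A^2=-I_n$; and the defining relations give $\alpha^2=-1$, $\gamma^2=-1-s$, and $\beta(\gamma-\alpha)=1$. Now I compute the four blocks of $A'^2$. The $(1,1)$ block is $\gamma^2+\x\x^T=\gamma^2+s=-1$, as required. For the $(1,2)$ block, $\gamma\x+\x(A+E)=\gamma\x+\alpha\x+\beta s\,\x=(\gamma+\alpha+\beta s)\x$; using $(\gamma-\alpha)(\gamma+\alpha)=\gamma^2+1=-s$ one gets $\gamma+\alpha=-\beta s$, so this block vanishes, and the $(2,1)$ block, being its transpose, is also zero.

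The $(2,2)$ block is the one genuine computation, and I expect it to be the main obstacle, since it requires expanding a square of a sum of matrices and combining the eigenvector relation with the rank-one structure of $E$. Expanding, $(A+E)^2=A^2+AE+EA+E^2$, where $AE=\alpha\beta\,\x^T\x$, $EA=\alpha\beta\,\x^T\x$, and $E^2=\beta^2 s\,\x^T\x$ (the last using $\x^T\x\cdot\x^T\x=s\,\x^T\x$). Hence the block equals $\x^T\x+(A+E)^2=-I_n+(1+2\alpha\beta+\beta^2 s)\,\x^T\x$, and I would finish by showing the scalar coefficient vanishes: multiplying $1+2\alpha\beta+\beta^2 s$ by $(\gamma-\alpha)^2$ and using $\beta(\gamma-\alpha)=1$ gives $(\gamma-\alpha)^2+2\alpha(\gamma-\alpha)+s=\gamma^2-\alpha^2+s=\gamma^2+1+s=0$. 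Therefore all four blocks match $-I_{n+1}$, so $A'^2=-I_{n+1}$.

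Finally, because $G'=(I_{n+1}\mid A')$ carries an identity block its rows are linearly independent, so $G'$ genuinely generates a code $\CC'$ of length $2n+2$ and dimension $n+1$; combined with $A'(A')^T=-I_{n+1}$ this exhibits $\CC'$ as a self-dual code with symmetric generator matrix. For the degenerate case $\x=\0$ the scalar $s=0$ forces $E=O$, and taking $\gamma=\alpha$ yields $A'=(\alpha)\oplus A$, i.e. $G'=(1\mid\alpha)\oplus(I_n\mid A)$; self-duality then follows from $\alpha^2=-1$ together with Definition \ref{direct sum}, and since the first generator row is $(1,\0,\alpha,\0)$ of weight two while a self-dual code contains no codeword of weight one, the minimum weight is exactly two.
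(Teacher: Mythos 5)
Your proof is correct and follows essentially the same route as the paper's: a blockwise verification that $A'(A')^T=-I_{n+1}$ (you pass through $A'^2$, which is the same computation since $A'$ is symmetric), using the same ingredients — the eigenvector relation $A\x^T=\alpha\x^T$, the rank-one structure of $E$, and the identities $\gamma^2=-1-\x\x^T$, $\beta(\gamma-\alpha)=1$. The only differences are cosmetic streamlinings (e.g. substituting $\gamma+\alpha=-\beta s$ in the off-diagonal block) together with your fleshing-out of the symmetry check, the availability of $\gamma$, and the $\x=\0$ case, which the paper dispatches as trivial.
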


	\begin{proof}
		The `particular' part is trivial. Since the row rank of $G'$ is $n+1$, we have only to show that $A' (A')^T$ is equal to $-I_{n+1}$.
		
		By the assumption, we have that $A A^T = -I_n$ and  $A\x^T =  \alpha \x^T$, thus $AE^T = A(\beta\x^T\x)=\beta(A\x^T)\x=\alpha\beta \x^T\x$ and $EA^T =(AE^T)^T=(\alpha\beta \x^T\x)^T=\alpha\beta \x^T\x.$ Therefore,
		
		\begin{eqnarray*}
			A' (A')^T&= &\left(\begin{array}{c|c}
				\gamma & \x  \\
				\hline
				
				\x^T& A+E \\
			\end{array}\right)\left(\begin{array}{c|c}
				\gamma & \x  \\
				\hline
				
				\x^T& A+E \\
			\end{array}\right)^T\\
			&=&\left(\begin{array}{c|c}
				\gamma^2+ \x\x^T & \gamma \x +  \x A^T +\x E^T  \\
				\hline
				
				\gamma \x^T +  A \x^T +E \x^T  &\x^T\x+ AA^T + AE^T + E A^T + EE^T  \\
			\end{array}\right)\\
			&=&\left(\begin{array}{c|c}
				-1 & \gamma \x +  \alpha \x +\beta\x (\x^T\x)^T  \\
				\hline
				
				\gamma \x^T +  A \x^T +E \x^T  & -I_n+\x^T\x+ 2\alpha\beta \x^T\x + EE^T  \\
			\end{array}\right).\\
		\end{eqnarray*}
		Since $\x \x^T = -\gamma^2  -1$, we simplify the (1,2)-block matrix as
		\begin{eqnarray*} \gamma \x +  \alpha \x +\beta\x (\x^T\x)^T 
			&=&  \gamma \x +  \alpha \x + \beta(-\gamma^2-1)\x \\ 
			&=&  (\gamma  +  \alpha -  \beta(\gamma^2+1))\x \\ 
			&=& \beta( \beta^{-1} ( \gamma  +  \alpha) -(\gamma^2+1))\\ 
			&=& \beta(( \gamma  -  \alpha) ( \gamma  +  \alpha) -(\gamma^2+1))\\ 
			&=& \beta(( \gamma^2+1) -(\gamma^2+1))\\ 
			&=&O_{1\times n}.	\end{eqnarray*}	
		The (2,1)-block matrix $\gamma \x^T +  A \x^T +E \x^T=O_{n \times 1}$ since this is the transpose of the (1,2)-block matrix.
		Finally, there remains only to show that the (2,2)-block matrix is equal to $-I_n$. Recall that $\alpha^2=-1$ and $\beta=(\gamma-\alpha)^{-1}$. Thus,
		\begin{eqnarray*}  \x^T\x+ 2\alpha\beta \x^T\x + EE^T
			&=& \x^T\x  + 2\alpha\beta \x^T\x+ \beta^{2}(\x^T\x)(\x^T\x)^T \\
			&=&  \x^T\x + 2\alpha\beta \x^T\x+ \beta^{2} \x^T(-\gamma^2-1)\x \\
			&=&  (1+2\alpha\beta - \beta^{2}\gamma^2-\beta^{2})\x^T\x \\
			&=& \beta^2( \beta^{-2} + 2\alpha\beta^{-1} -\gamma^2-1) \x^T\x \\
			&=& \beta^2\{ ( \gamma  -  \alpha)^2 + 2 \alpha ( \gamma  -  \alpha)  -\gamma^2-1)\} \x^T\x\\ 
			&=& \beta^2( \gamma^2 - 2 \gamma \alpha -1 +2\gamma \alpha +2 - \gamma^2 -1) \x^T\x\\ 
			&=&O_{n \times n}	\end{eqnarray*}	
		and the (2,2)-block matrix is equal to $-I_n$. This is what was to be shown.
	\end{proof}

	By the construction method of Theorem \ref{Building up}, we obtain symmetric self-dual codes of length $2n+2$ from a symmetric self-dual code of length $2n$.
	From now on, we discuss the converse of Theorem \ref{Building up}.

	\begin{lemma}\label{eigenvector}
		Suppose that $\CC$ is a symmetric self-dual code over $GF(q)$ with generator matrix in the form:
		$$
		\left(\begin{array}{c|c|c|c}
		I_n & O & \gamma & \x \\
		\hline
		
		O& 1&\x^T&A\\
	\end{array}\right),$$
	where $\x$ is non-zero. Let $\alpha$ be a square root of -1 over a finite field $GF(q)$ which is not equal to $\gamma$ and let $\beta  =(\gamma-\alpha)^{-1}$. Then $\x$ is an eigenvector of $A-\beta \x^T \x$ with eigenvalue $\alpha$.
	\end{lemma}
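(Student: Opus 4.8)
The plan is to read off the defining relations of the code from the block structure of its generator matrix and then verify the eigenvalue equation by a direct substitution. Write the right-hand half of the generator matrix as the symmetric $(n+1)\times(n+1)$ matrix $A'=\begin{smatrix}\gamma & \x \\ \x^T & A\end{smatrix}$. Since $\CC$ is symmetric and self-dual, exactly as in the proof of Lemma~\ref{eigenspace} we have $A'=(A')^T$ together with $A'(A')^T=-I_{n+1}$, and hence $(A')^2=-I_{n+1}$. The whole argument rests on squaring $A'$ in block form and comparing with $-I_{n+1}$.

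First I would compute $(A')^2$ blockwise. The $(1,1)$ scalar block yields $\gamma^2+\x\x^T=-1$, i.e.\ $\x\x^T=-1-\gamma^2$, and the off-diagonal block yields $\gamma\x^T+A\x^T=O$, that is, $A\x^T=-\gamma\x^T$. (The $(2,1)$ block gives the same relation by symmetry, and the $(2,2)$ block gives $A^2=-I_n-\x^T\x$, which I will not need.) It is worth noting in passing that $\x^T$ is already an eigenvector of $A$ itself, with eigenvalue $-\gamma$; the role of the rank-one correction $-\beta\x^T\x$ is precisely to move this eigenvalue to $\alpha$.

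Next I would substitute these two relations into $(A-\beta\x^T\x)\x^T$. Using $A\x^T=-\gamma\x^T$ and the fact that $\x\x^T=-1-\gamma^2$ is a scalar, the expression collapses to $\bigl(-\gamma+\beta(1+\gamma^2)\bigr)\x^T$. It then remains to check the scalar identity $-\gamma+\beta(1+\gamma^2)=\alpha$. Clearing the denominator with $\beta=(\gamma-\alpha)^{-1}$ reduces this to $\alpha\gamma+1=\alpha(\gamma-\alpha)$, which holds because $\alpha^2=-1$. This is the one genuinely arithmetic checkpoint, and it is exactly where the hypotheses $\alpha^2=-1$ and $\gamma\ne\alpha$ (so that $\beta$ exists) enter; everything else is bookkeeping. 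I therefore expect no real obstacle beyond keeping the block computation and the final scalar simplification straight, which together give $(A-\beta\x^T\x)\x^T=\alpha\x^T$, as claimed.
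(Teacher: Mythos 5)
Your proposal is correct and follows essentially the same route as the paper's proof: extract the relations $\x\x^T=-1-\gamma^2$ and $A\x^T=-\gamma\x^T$ from the block expansion of the self-duality condition on $\begin{smatrix}\gamma & \x\\ \x^T & A\end{smatrix}$, then substitute into $(A-\beta\x^T\x)\x^T$ and verify the scalar identity $-\gamma+\beta(1+\gamma^2)=\alpha$ using $\alpha^2=-1$. Your added observation that $\x^T$ is an eigenvector of $A$ itself with eigenvalue $-\gamma$, so the rank-one term merely shifts the eigenvalue to $\alpha$, is a nice touch but does not change the argument.
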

	
	\begin{proof}
		Since $\CC$ is a symmetric self-dual code, 
		
		$$\left(\begin{array}{c|c}
		\gamma & \x  \\
		\hline
		
		\x^T& A \\
	\end{array}\right)\left(\begin{array}{c|c}
		\gamma & \x  \\
		\hline
		
		\x^T& A \\
	\end{array}\right)^T=-I_{n+1}.$$ Thus, 
	\begin{equation}\label{sim_eq}
	\begin{cases} \gamma^2 + \x \x^T = -1 \\  \gamma \x + \x A^T = O  \\\gamma \x^T +  A\x^T = O \\\x^T\x +AA^T =-I_n.
	\end{cases}
	\end{equation}
	By using these equalties, we show that 
	\begin{eqnarray*} (A- \beta \x^T\x) \x^T &=& A \x^T - \beta \x^T( \x\x^T) \\
	&=& -\gamma \x^T -\beta\x^T(-1-\gamma^2)  \\
	&=& \beta(-\beta^{-1} \gamma +1 + \gamma^2) \x^T\\ 
	&=& \beta(-(\gamma-\alpha) \gamma +1 + \gamma^2) \x^T\\ 
	&=& \beta( \alpha \gamma +1 ) \x^T\\ 
	&=& (\gamma-\alpha)^{-1}( \alpha \gamma -\alpha^2) \x^T\\ 
	&=&\alpha \x^T.	\end{eqnarray*}	
	Thus the result follows.
	\end{proof}
	
	\begin{theorem}{\label{converse}}
		
		Any symmetric self-dual code $\CC$ of length $2n$ over $GF(q)$ for a prime $q=4k+1$ can be constructed from some symmetric self-dual code $\CC'$ of length $2n-2$ by the construction method in Theorem \ref{Building up}.

	\end{theorem}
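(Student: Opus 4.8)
The plan is to reverse the construction of Theorem \ref{Building up} one coordinate at a time. Writing a symmetric generator matrix of $\CC$ as $(I_n \mid M)$ with $M = M^T$ and hence $M^2 = -I_n$ (since $MM^T = -I_n$), I would single out one coordinate and display
\[
M = \begin{pmatrix} \gamma & \x \\ \x^T & B \end{pmatrix},
\]
where $\gamma = M_{ii}$ is a diagonal entry, $\x$ is the rest of the corresponding row, and $B$ is the complementary $(n-1)\times(n-1)$ symmetric block. Expanding $M^2 = -I_n$ blockwise reproduces exactly the relations \eqref{sim_eq} with $A$ there replaced by $B$; in particular $\gamma^2 + \x\x^T = -1$, $B\x^T = -\gamma\x^T$, and $\x^T\x + B^2 = -I_{n-1}$. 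The goal is to recognize $(I_n \mid M)$ as the output of Theorem \ref{Building up} applied to the length-$(2n-2)$ code obtained from $B$ after removing the correction $E = \beta\x^T\x$.

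For the core step I would choose $\alpha$ to be a square root of $-1$ with $\alpha \neq \gamma$ (one of the two roots $\pm\alpha$ always differs from $\gamma$, as $\alpha \neq -\alpha$), set $\beta = (\gamma - \alpha)^{-1}$, and define $A = B - \beta\x^T\x$. Then $A$ is symmetric because $B$ and the outer product $\x^T\x$ are, and Lemma \ref{eigenvector} gives at once that $\x^T$ is an eigenvector of $A$ with eigenvalue $\alpha$. The identity $A^2 = -I_{n-1}$ is then verified by the same block computation that appears in the proof of Theorem \ref{Building up}, only run backwards: substituting $B = A + \beta\x^T\x$ into $\x^T\x + B^2 = -I_{n-1}$ and using $A\x^T = \alpha\x^T$, $\alpha^2 = -1$, $\beta = (\gamma-\alpha)^{-1}$ collapses the $\x^T\x$-terms to zero exactly as the $(2,2)$-block did there. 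Hence the code $\CC'$ with generator matrix $(I_{n-1}\mid A)$ is symmetric self-dual of length $2n-2$, and feeding $\CC'$, the eigenvector $\x^T$, and the scalar $\gamma$ (which satisfies $\gamma^2 = -1 - \x\x^T$ by the first relation) into Theorem \ref{Building up} returns $\begin{pmatrix} \gamma & \x \\ \x^T & A + \beta\x^T\x \end{pmatrix} = M$, i.e.\ reconstructs $\CC$ up to the position of the singled-out coordinate.

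The one hypothesis of Theorem \ref{Building up} that is not automatic is that $\x\x^T + 1$ be a nonzero square. From $\gamma^2 + \x\x^T = -1$ we get $\x\x^T + 1 = -\gamma^2 = (\alpha\gamma)^2$, which is a square (as $-1 = \alpha^2$) and is nonzero exactly when the chosen diagonal entry $\gamma = M_{ii}$ is nonzero. Thus the argument goes through verbatim whenever $M$ has a nonzero diagonal entry; and if some row of $M$ has its only nonzero entry on the diagonal, that coordinate gives $\x = 0$ with $\gamma^2 = -1$, which is precisely the ``particular'' (direct-sum) case of Theorem \ref{Building up}.

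The genuine obstacle is therefore the remaining case in which $M$ has identically zero diagonal, so that every peeled coordinate yields $\gamma = 0$, $\x \neq 0$; such codes do occur, so this case cannot simply be dismissed. The hard part will be to show that every such code is nonetheless monomially equivalent to a symmetric self-dual code whose generator has a nonzero diagonal entry. I would attack this by starting from an eigenvector codeword $(\mathbf{v},\alpha\mathbf{v})\in\CC$ supplied by Lemma \ref{eigenspace} and using a symmetry-preserving pivot on a suitable coordinate pair $\{j, n+j\}$ to move a nonzero value onto the diagonal, after which the core reconstruction applies. Proving that such a pivot always exists — equivalently, that the building-up map of Theorem \ref{Building up} is surjective onto equivalence classes — is where the real work lies, and the hypothesis $q = 4k+1$ enters throughout via the existence of $\alpha$.
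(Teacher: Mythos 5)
Your core reconstruction is the paper's own argument. The paper proves Theorem \ref{converse} in one line by citing Lemma \ref{eigenvector}, and that lemma is exactly your peeling step: decompose the symmetric matrix into $\gamma$, $\x$, $B$, derive the blockwise relations \eqref{sim_eq}, choose $\alpha\neq\gamma$ with $\alpha^2=-1$, and observe that $\x$ is an eigenvector of $A=B-\beta\x^T\x$ with eigenvalue $\alpha$. You in fact go further than the paper in one welcome respect: you verify $A^2=-I_{n-1}$, i.e.\ that the peeled code really is symmetric self-dual, a step the paper leaves entirely implicit. Your handling of the $\x=\mathbf{0}$ case via the ``particular'' direct-sum clause also matches the intended reading.

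Where you go astray is the final paragraph: the ``genuine obstacle'' you flag is not genuine, and the monomial-pivot machinery you sketch (and leave unproved) is unnecessary. Inspect the proof of Theorem \ref{Building up}: the only facts it uses are $\gamma^2=-1-\x\x^T$, $\gamma\neq\alpha$ (so that $\beta=(\gamma-\alpha)^{-1}$ exists), $\alpha^2=-1$, and the eigenvector equation $A\x^T=\alpha\x^T$. Nowhere does it use $\gamma\neq 0$, and $\gamma=0$ never conflicts with $\gamma\neq\alpha$ since $\alpha^2=-1\neq 0$; indeed the paper itself runs the construction with $\gamma=0$ (and $\beta=(\gamma-\alpha)^{-1}=2$) in Example \ref{ex3.6}. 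So the hypothesis ``$\x\x^T+1$ is a non-zero square'' in the statement of Theorem \ref{Building up} is an overstatement — ``square'' suffices — and in the converse direction $\x\x^T+1=-\gamma^2=(\alpha\gamma)^2$ is automatically a square for \emph{every} diagonal entry, zero or not. Your reconstruction therefore applies verbatim to any coordinate, the zero-diagonal case included, and the proof is complete once the pivot discussion is deleted. As submitted, your proposal is formally incomplete exactly where you said ``the real work lies''; but the incompleteness comes from reading literally a hypothesis that the paper's own proof and example do not enforce, and to your credit this exposes a real imprecision: the statement of Theorem \ref{Building up} and the one-line proof of Theorem \ref{converse} are not literally consistent unless ``non-zero square'' is weakened to ``square.''
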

	\begin{proof}
		This follows directly from Lemma \ref{eigenvector}.
	\end{proof}

	\begin{remark}    
		Theorem \ref{Building up} and \ref{converse} might be regarded as a special case of well-known `building-up' construction method \cite[Proposition 2.1, 2.2]{Kim2004}. But Theorem \ref{Building up} and \ref{converse}  has a significant differences. We only have to choose vectors from an eigenspace of $A$ with an eigenvalue of a root of $-1$. This improves the efficiency to find the best self-dual code from a self-dual code of smaller length. We also point out that all of the self-dual codes used in this method have symmetric generator matrices. Thus, we can focus our concern in one subclass of self-dual codes that have a certain automorphism in their automorphism group.
		
	\end{remark}

	\begin{example}\label{ex3.6}
		Let $\CC_5^{16}$ be a symmetric self-dual [16,8,6] code over $GF(5)$ with generator matrix 
		$$G= (I_8 \mid A)=\begin{smatrix}
		1& 0& 0& 0& 0& 0& 0& 0& 1& 4& 3& 3& 2& 4& 0& 2\\
		0& 1& 0& 0& 0& 0& 0& 0& 4& 2& 0& 2& 4& 3& 2& 1\\
		0& 0& 1& 0& 0& 0& 0& 0& 3& 0& 1& 1& 3& 3& 3& 4\\
		0& 0& 0& 1& 0& 0& 0& 0& 3& 2& 1& 0& 2& 0& 0& 1\\
		0& 0& 0& 0& 1& 0& 0& 0& 2& 4& 3& 2& 4& 1& 3& 0\\
		0& 0& 0& 0& 0& 1& 0& 0& 4& 3& 3& 0& 1& 1& 2& 2\\
		0& 0& 0& 0& 0& 0& 1& 0& 0& 2& 3& 0& 3& 2& 3& 2\\
		0& 0& 0& 0& 0& 0& 0& 1& 2& 1& 4& 1& 0& 2& 2& 3
		\end{smatrix},$$ which is optimal.
		Then, the eigenspace of $A$ with eigenvalue $\alpha=2$ is a subspace of $GF(5)^8$ of dimension four generated by row vectors of the matrix 
		$\begin{smatrix}
		1& 0& 0& 0& 0& 3& 0& 2\\
		0& 1& 0& 0& 3& 4& 2& 2\\
		0& 0& 1& 0& 4& 3& 2& 1\\
		0& 0& 0& 1& 1& 0& 2& 0
		\end{smatrix}.$
		Among these $5^4=625$ eigenvectors, if we choose the vector $\x=43411113$, then using the construction method in Theorem \ref{Building up} with $\gamma=0$ and $\beta=(\gamma-\alpha)^{-1}=2$, we obtain an `optimal' symmetric self-dual [18,9,7] code with generator matrix
		
		$$
		G'=	\left(\begin{array}{c|c|c|c}
		1 & O& \gamma & \x \\
		\hline
		
		O& I_n&\x^T& A+\beta \x^T \x \\
	\end{array}\right)= \begin{smatrix}
		1& 0& 0& 0& 0& 0& 0& 0& 0& 0& 4& 3& 4& 1& 1& 1& 1& 3\\
		0& 1& 0& 0& 0& 0& 0& 0& 0& 4& 3& 3& 0& 1& 0& 2& 3& 1\\
		0& 0& 1& 0& 0& 0& 0& 0& 0& 3& 3& 0& 4& 3& 0& 4& 3& 4\\
		0& 0& 0& 1& 0& 0& 0& 0& 0& 4& 0& 4& 3& 4& 1& 1& 1& 3\\
		0& 0& 0& 0& 1& 0& 0& 0& 0& 1& 1& 3& 4& 2& 4& 2& 2& 2\\
		0& 0& 0& 0& 0& 1& 0& 0& 0& 1& 0& 0& 1& 4& 1& 3& 0& 1\\
		0& 0& 0& 0& 0& 0& 1& 0& 0& 1& 2& 4& 1& 2& 3& 3& 4& 3\\
		0& 0& 0& 0& 0& 0& 0& 1& 0& 1& 3& 3& 1& 2& 0& 4& 0& 3\\
		0& 0& 0& 0& 0& 0& 0& 0& 1& 4& 1& 4& 3& 2& 1& 3& 3& 1
	\end{smatrix}$$
	
\end{example}    

We close this section comparing the complexity of our method with that of the well-known `building-up' method in \cite[Proposition 2.1]{Kim2004}. If we apply `building-up' method in \cite[Proposition 2.1]{Kim2004} to the self-dual code $\CC_5^{16}$ of length 16 in Example \ref{ex3.6} to construct self-dual codes of length 18, a vector is typically chosen in $GF(5)^{15}$, i.e., there are $5^{15}$ possible choices. In contrast, as we have already seen in Example \ref{ex3.6}, the number of possible choices of vectors is reduced only to $5^4$ when our new method is applied.

According to our computational experiences to obtain best self-dual codes in Table \ref{our_results1}, it needs only about $q^{\lfloor \frac{n}{2}\rfloor}$ choices of eigenvectors when given length is $2n$. Due to this reduced complexity, we succeed in constructing self-dual codes of length greater than 22.

\section{Computational results of optimal or best-known self-dual codes}
In this section, we construct optimal self-dual codes over $GF(13)$ and $GF(17)$ by using the method in the previous section. From now on, for the brevity, we denote a symmetric $[2n,k,d]$ self-dual code over $GF(p)$ as $\CC_{p}^{2n}$ and its generator matrix as $(I_n \mid A_{p}^{2n})$. All the computations are done in Magma \cite{Magma}.

\subsection{Optimal self-dual codes over $GF(13)$. }


In \cite{betsumiya2003}, the optimal minimum weights of self-dual codes over $GF(13)$ are determined for lengths up to 20 except 12, and the minimum optimal weight of length 12 is determined in \cite{grassl2008}. However, we pointed out that the existence of optimal self-dual codes of length 18 turns out to be unknown. This is to be discussed in Remark \ref{remark1}. We obtain [18,9,8] self-dual code, which is now known to have the best-known minimum weight, with a symmetric generator matrix,

$$G_{13}^{18}= \begin{smatrix}
1 & 0 & 0 & 0 & 0 & 0 & 0 & 0 & 0 & 10 & 5 & 5 & 0 & 1 & 9 & 12 & 2 & 3 \\
0 & 1 & 0 & 0 & 0 & 0 & 0 & 0 & 0 & 5 & 7 & 11 & 10 & 4 & 4 & 12 & 6 & 5 \\
0 & 0 & 1 & 0 & 0 & 0 & 0 & 0 & 0 & 5 & 11 & 5 & 3 & 5 & 3 & 7 & 6 & 5 \\
0 & 0 & 0 & 1 & 0 & 0 & 0 & 0 & 0 & 0 & 10 & 3 & 5 & 6 & 6 & 0 & 6 & 2 \\
0 & 0 & 0 & 0 & 1 & 0 & 0 & 0 & 0 & 1 & 4 & 5 & 6 & 0 & 10 & 5 & 1 & 9 \\
0 & 0 & 0 & 0 & 0 & 1 & 0 & 0 & 0 & 9 & 4 & 3 & 6 & 10 & 12 & 9 & 4 & 6 \\
0 & 0 & 0 & 0 & 0 & 0 & 1 & 0 & 0 & 12 & 12 & 7 & 0 & 5 & 9 & 3 & 12 & 1 \\
0 & 0 & 0 & 0 & 0 & 0 & 0 & 1 & 0 & 2 & 6 & 6 & 6 & 1 & 4 & 12 & 4 & 10 \\
0 & 0 & 0 & 0 & 0 & 0 & 0 & 0 & 1 & 3 & 5 & 5 & 2 & 9 & 6 & 1 & 10 & 11
\end{smatrix}.$$



In Table \ref{Table_GF(13)}, we illustrate the chain of self-dual codes constructed by using Theorem \ref{Building up}, successively from [26,13,10] code $\CC_{13}^{26,1}$ to [40,20,14] code $\CC_{13}^{40,1}$. These self-dual codes are all new and have the best-known minimum weights. The [26,13,10] self-dual code $\CC_{13}^{26,1}$ has a generator matrix $(I_{13} \mid A_{13}^{26,1})$ where
$$A_{13}^{26, 1}= \begin{smatrix}
7 & 7 & 1 & 8 & 3 & 6 & 3 & 8 & 10 & 10 & 10 & 0 & 9 \\
7 & 8 & 10 & 8 & 7 & 5 & 7 & 8 & 8 & 11 & 7 & 0 & 4 \\
1 & 10 & 11 & 11 & 10 & 9 & 5 & 7 & 10 & 4 & 8 & 7 & 11 \\
8 & 8 & 11 & 12 & 7 & 11 & 3 & 12 & 4 & 12 & 11 & 8 & 11 \\
3 & 7 & 10 & 7 & 10 & 0 & 8 & 12 & 12 & 7 & 10 & 10 & 1 \\
6 & 5 & 9 & 11 & 0 & 8 & 5 & 7 & 3 & 11 & 8 & 4 & 8 \\
3 & 7 & 5 & 3 & 8 & 5 & 3 & 4 & 11 & 5 & 6 & 11 & 6 \\
8 & 8 & 7 & 12 & 12 & 7 & 4 & 8 & 0 & 4 & 3 & 1 & 9 \\
10 & 8 & 10 & 4 & 12 & 3 & 11 & 0 & 4 & 8 & 3 & 10 & 7 \\
10 & 11 & 4 & 12 & 7 & 11 & 5 & 4 & 8 & 5 & 9 & 1 & 4 \\
10 & 7 & 8 & 11 & 10 & 8 & 6 & 3 & 3 & 9 & 11 & 0 & 8 \\
0 & 0 & 7 & 8 & 10 & 4 & 11 & 1 & 10 & 1 & 0 & 5 & 4 \\
9 & 4 & 11 & 11 & 1 & 8 & 6 & 9 & 7 & 4 & 8 & 4 & 10
\end{smatrix}.$$
We give generator matrices of new symmetric self-dual codes over $GF(13)$ of lengths upto 40  in Appendix \ref{GF(13)}.
\begin{table}[h!]
\begin{center}
	\begin{tabular}{lllll}
		\hline
		Code &$\alpha$ & $\gamma$   & $\x$  & min. wt.   \\\hline
		\hline
		$\CC_{13}^{26,1}$ &  	&              & 		&	10         	\\\hline
		$\CC_{13}^{28,1}$ &  8	&    4          & $ \begin{smatrix}2, 10, 8, 6, 3, 1, 12, 1,11, 8, 9, 11, 2\end{smatrix}$			&	  11        	\\\hline
		$\CC_{13}^{30,1}$  &  8	&    11          & $ \begin{smatrix}10, 8, 9, 2, 1, 4, 12, 12, 7,
		12, 2, 2, 6, 6\end{smatrix}$			&	  11        	\\\hline
		$\CC_{13}^{32,1}$ &  8	&    11         &$  \begin{smatrix} 5, 8, 5, 2, 7, 11, 11, 10,
		12, 2, 11, 12, 3, 4, 7\end{smatrix}$			&	 12         	\\\hline
		$\CC_{13}^{34,1}$  & 5 	&    1         & $ \begin{smatrix}0, 3, 7, 5, 1, 10, 11,
		3, 7, 2, 10, 12, 2, 6, 12, 10\end{smatrix}$			&	   12       	\\\hline
		$\CC_{13}^{36,1}$  &  8	&    6         & $ \begin{smatrix}3, 1, 1, 5, 8, 1, 6, 
		3, 1, 4, 1, 1, 3, 11, 8, 2, 4\end{smatrix}$			&	   13       	\\\hline
		$\CC_{13}^{38,1}$  &  5 	&    3          & $ \begin{smatrix}8, 0, 3, 2, 11, 6,
		8, 3, 9, 3, 7, 1, 7, 2, 8, 11, 9, 2\end{smatrix}$			&	     13     	\\\hline
		$\CC_{13}^{40,1}$  &  5	&    8          &$ \begin{smatrix}5, 10, 5, 4, 1,
		8, 1, 2, 3, 4, 11, 5, 8, 6, 3, 2, 12, 9, 3\end{smatrix}$			&	      14    	\\\hline
		
	\end{tabular}
	\caption{Constuction of a chain of best-known self-dual codes over GF(13) }
	\label{Table_GF(13)}
\end{center}	
\end{table}

\subsection{Optimal self-dual codes over $GF(17)$  }

We consruct [26,13,10] and [28,14,11] self-dual code over GF(17) which are new, succesively from [24,12,9] self-dual code by using Theorem \ref{Building up} as follows.
At first, we obtain [24,12,9] code with generator matrix $(I_{12} \mid A_{17}^{24,1})$ where
$$A_{17}^{24, 1}= \begin{smatrix}
10 & 8 & 15 & 7 & 4 & 13 & 10 & 11 & 6 & 12 & 5 & 2 \\
8 & 3 & 5 & 14 & 15 & 14 & 0 & 6 & 12 & 8 & 9 & 9 \\
15 & 5 & 13 & 1 & 9 & 0 & 6 & 9 & 14 & 3 & 8 & 9 \\
7 & 14 & 1 & 2 & 3 & 15 & 6 & 5 & 14 & 0 & 12 & 10 \\
4 & 15 & 9 & 3 & 15 & 2 & 2 & 12 & 12 & 14 & 9 & 14 \\
13 & 14 & 0 & 15 & 2 & 9 & 3 & 2 & 13 & 8 & 0 & 8 \\
10 & 0 & 6 & 6 & 2 & 3 & 7 & 14 & 4 & 2 & 0 & 5 \\
11 & 6 & 9 & 5 & 12 & 2 & 14 & 12 & 3 & 15 & 13 & 16 \\
6 & 12 & 14 & 14 & 12 & 13 & 4 & 3 & 7 & 1 & 5 & 0 \\
12 & 8 & 3 & 0 & 14 & 8 & 2 & 15 & 1 & 5 & 13 & 13 \\
5 & 9 & 8 & 12 & 9 & 0 & 0 & 13 & 5 & 13 & 10 & 12 \\
2 & 9 & 9 & 10 & 14 & 8 & 5 & 16 & 0 & 13 & 12 & 1
\end{smatrix}.$$

By taking $\gamma = 4$ and the eigenvector $(5 , 11 , 16 , 1 , 11 , 8 , 3 , 4 , 8 , 4 , 6 , 6)$ of $A_{17}^{12,9}$ corresponding eigenvalue $\alpha=13$, we obtain [26,13,10] self-dual code with generator matrix $(I_{13} \mid A_{17}^{26,1})$ where
$$A_{17}^{26, 1}= \begin{smatrix}
4 & 5 & 11 & 16 & 1 & 11 & 8 & 3 & 4 & 8 & 4 & 6 & 6 \\
5 & 11 & 0 & 8 & 14 & 13 & 1 & 14 & 5 & 11 & 6 & 13 & 10 \\
11 & 0 & 16 & 10 & 9 & 11 & 8 & 2 & 3 & 6 & 5 & 13 & 13 \\
16 & 8 & 10 & 11 & 3 & 14 & 16 & 12 & 0 & 13 & 11 & 3 & 4 \\
1 & 14 & 9 & 3 & 0 & 15 & 16 & 0 & 14 & 15 & 9 & 0 & 15 \\
11 & 13 & 11 & 14 & 15 & 11 & 13 & 4 & 9 & 6 & 11 & 13 & 1 \\
8 & 1 & 8 & 16 & 16 & 13 & 0 & 6 & 6 & 4 & 12 & 6 & 14 \\
3 & 14 & 2 & 12 & 0 & 4 & 6 & 6 & 7 & 7 & 12 & 15 & 3 \\
4 & 5 & 3 & 0 & 14 & 9 & 6 & 7 & 14 & 7 & 0 & 16 & 2 \\
8 & 11 & 6 & 13 & 15 & 6 & 4 & 7 & 7 & 15 & 5 & 11 & 6 \\
4 & 6 & 5 & 11 & 9 & 11 & 12 & 12 & 0 & 5 & 7 & 16 & 16 \\
6 & 13 & 13 & 3 & 0 & 13 & 6 & 15 & 16 & 11 & 16 & 6 & 8 \\
6 & 10 & 13 & 4 & 15 & 1 & 14 & 3 & 2 & 6 & 16 & 8 & 14
\end{smatrix}.$$

Again, by taking $\gamma = 4$ and the eigenvector $(14 , 11 , 12 , 0 ,11 , 11 , 0 ,10 , 12 , 15 , 11 , 0 , 4 )$ of $A_{17}^{26, 1}$ corresponding eigenvalue $\alpha=13$, we obtain [28,14,11] self-dual code with generator matrix $(I_{14} \mid A_{17}^{28, 1})$ where
[28,14,11] self-dual code:\\
$$A_{17}^{28, 1}= \begin{smatrix}
4 & 14 & 11 & 12 & 0 & 11 & 11 & 0 & 10 & 12 & 15 & 11 & 0 & 4 \\
14 & 3 & 3 & 15 & 16 & 16 & 9 & 8 & 12 & 8 & 13 & 2 & 6 & 13 \\
11 & 3 & 7 & 8 & 8 & 10 & 9 & 1 & 15 & 13 & 4 & 2 & 13 & 7 \\
12 & 15 & 8 & 0 & 10 & 0 & 2 & 8 & 0 & 4 & 3 & 13 & 13 & 2 \\
0 & 16 & 8 & 10 & 11 & 3 & 14 & 16 & 12 & 0 & 13 & 11 & 3 & 4 \\
11 & 16 & 10 & 0 & 3 & 13 & 11 & 16 & 1 & 5 & 8 & 5 & 0 & 12 \\
11 & 9 & 9 & 2 & 14 & 11 & 7 & 13 & 5 & 0 & 16 & 7 & 13 & 15 \\
0 & 8 & 1 & 8 & 16 & 16 & 13 & 0 & 6 & 6 & 4 & 12 & 6 & 14 \\
10 & 12 & 15 & 0 & 12 & 1 & 5 & 6 & 10 & 5 & 13 & 13 & 15 & 8 \\
12 & 8 & 13 & 4 & 0 & 5 & 0 & 6 & 5 & 15 & 4 & 8 & 16 & 8 \\
15 & 13 & 4 & 3 & 13 & 8 & 16 & 4 & 13 & 4 & 7 & 15 & 11 & 5 \\
11 & 2 & 2 & 13 & 11 & 5 & 7 & 12 & 13 & 8 & 15 & 3 & 16 & 13 \\
0 & 6 & 13 & 13 & 3 & 0 & 13 & 6 & 15 & 16 & 11 & 16 & 6 & 8 \\
4 & 13 & 7 & 2 & 4 & 12 & 15 & 14 & 8 & 8 & 5 & 13 & 8 & 16
\end{smatrix}.$$

In Table \ref{Table_GF(17)}, we illustrate a chain of self-dual codes constructed by using Theorem \ref{Building up}, successively from a [28,14,10] code to a [40,20,14] code.
The [28,14,10] self-dual code $\CC_{17}^{28,2}$ has a generator matrix $(I_{14} \mid A_{17}^{28,2})$ where
$$A_{17}^{28, 2}= \begin{smatrix}
4 & 2 & 4 & 9 & 9 & 7 & 16 & 7 & 13& 4 & 14 & 11 & 1 & 7 \\
2 & 14 & 16 & 14 & 12 & 3 & 1 & 0 & 3& 0 & 5 & 3 & 4 & 16 \\
4 & 16 & 4 & 2 & 0 & 5 & 16 & 13 & 2& 3 & 12 & 9 & 16 & 2 \\
9 & 14 & 2 & 16 & 12 & 0 & 15 & 14 & 8& 16 & 7 & 14 & 11 & 9 \\
9 & 12 & 0 & 12 & 12 & 7 & 0 & 4 & 13& 2 & 10 & 1 & 9 & 1 \\
7 & 3 & 5 & 0 & 7 & 13 & 12 & 5 & 2& 7 & 14 & 5 & 2 & 13 \\
16 & 1 & 16 & 15 & 0 & 12 & 4 & 14 & 11& 8 & 9 & 8 & 11 & 1 \\
7 & 0 & 13 & 14 & 4 & 5 & 14 & 13 & 8& 11 & 5 & 8 & 16 & 3 \\
13 & 3 & 2 & 8 & 13 & 2 & 11 & 8 & 14& 9 & 12 & 9 & 9 & 6 \\
4 & 0 & 3 & 16 & 2 & 7 & 8 & 11 & 9& 3 & 1 & 16 & 10 & 11 \\
14 & 5 & 12 & 7 & 10 & 14 & 9 & 5 & 12& 1 & 7 & 4 & 14 & 1 \\
11 & 3 & 9 & 14 & 1 & 5 & 8 & 8 & 9& 16 & 4 & 6 & 11 & 4 \\
1 & 4 & 16 & 11 & 9 & 2 & 11 & 16 & 9& 10 & 14 & 11 & 15 & 1 \\
7 & 16 & 2 & 9 & 1 & 13 & 1 & 3 & 6& 11 & 1 & 4 & 1 & 11
\end{smatrix}.$$

\begin{table}[h]
\begin{center}
	\begin{tabular}{lllll}
		\hline
		Code &$\alpha$ & $\gamma$   & $\x$  & min. wt.   \\\hline
		\hline
		$\CC_{17}^{28,2}$ &  	&            & 		&	  10        	\\\hline
		$\CC_{17}^{30,1}$  & 13 	&    14          & $ \begin{smatrix}14, 14, 0, 0, 15, 9, 9, 8, 1,
		12, 1, 2, 8, 15\end{smatrix}$			&	  12        	\\\hline
		$\CC_{17}^{32,1}$ &  4	&    11         &$  \begin{smatrix} 9, 4, 10, 11, 6, 4, 0, 9, 
		7, 7, 14, 4, 15, 13, 7\end{smatrix}$			&	 12         	\\\hline
		$\CC_{17}^{34,1}$  &  4 	&    1        & $ \begin{smatrix}3, 16, 5, 0, 0, 0, 11, 
		7, 7, 0, 6, 6, 5, 7, 2, 11,\end{smatrix}$			&	   12       	\\\hline
		$\CC_{17}^{36,1}$  &  4 &    7         & $ \begin{smatrix}10, 4, 7, 7, 6, 14, 
		9, 5, 6, 9, 8, 14, 13, 7, 4, 6, 14\end{smatrix}$			&	   13       	\\\hline
		$\CC_{17}^{38,1}$  &  13 	&    4          & $ \begin{smatrix}1, 9, 8, 8, 10, 7,
		13, 1, 9, 1, 10, 9, 0, 10, 16, 5, 2, 9\end{smatrix}$			&	     14     	\\\hline
		$\CC_{17}^{40,1}$  &  4&    9         &$ \begin{smatrix}12, 9, 13, 3, 
		0, 3, 0, 12, 15, 16, 3, 6, 15, 6, 15, 13, 10, 10, 2\end{smatrix}$			&	      14    	\\\hline
		
	\end{tabular}
	\caption{Constuction of a chain of best-known self-dual codes over GF(17) }
	\label{Table_GF(17)}
\end{center}	
\end{table}
We give generator matrices of new self-dual codes over $GF(17)$ of length upto 40 in Appendix \ref{GF(17)}.
Additionally, we constructed the best [34,17,13] self-dual code $\CC_{17}^{34,2}$ with generator matrix $(I_{17} \mid A_{17}^{34,2})$ where
$$A_{17}^{34, 2}= \begin{smatrix}
3 & 1 & 3 & 1 & 3 & 10 & 14 & 6 & 2
& 9 & 14 & 15 & 10 & 8 & 16 & 2 & 0 \\
1 & 5 & 6 & 3 & 7 & 1 & 9 & 15 & 15
& 0 & 10 & 9 & 4 & 13 & 16 & 11 & 7 \\
3 & 6 & 3 & 12 & 16 & 11 & 2 & 15 & 3
& 14 & 6 & 13 & 11 & 12 & 13 & 1 & 4 \\
1 & 3 & 12 & 3 & 11 & 16 & 7 & 0 & 12
& 15 & 0 & 9 & 3 & 11 & 2 & 1 & 10 \\
3 & 7 & 16 & 11 & 13 & 7 & 2 & 3 & 4
& 9 & 3 & 7 & 4 & 15 & 7 & 5 & 6 \\
10 & 1 & 11 & 16 & 7 & 7 & 14 & 13 & 2
& 1 & 5 & 1 & 14 & 5 & 8 & 8 & 15 \\
14 & 9 & 2 & 7 & 2 & 14 & 8 & 8 & 6
& 1 & 9 & 6 & 3 & 9 & 5 & 5 & 13 \\
6 & 15 & 15 & 0 & 3 & 13 & 8 & 5 & 11
& 14 & 3 & 3 & 14 & 2 & 16 & 7 & 2 \\
2 & 15 & 3 & 12 & 4 & 2 & 6 & 11 & 5
& 7 & 15 & 15 & 10 & 7 & 8 & 2 & 12 \\
9 & 0 & 14 & 15 & 9 & 1 & 1 & 14 & 7
& 2 & 0 & 10 & 4 & 15 & 9 & 13 & 6 \\
14 & 10 & 6 & 0 & 3 & 5 & 9 & 3 & 15
& 0 & 16 & 9 & 16 & 0 & 14 & 4 & 3 \\
15 & 9 & 13 & 9 & 7 & 1 & 6 & 3 & 15
& 10 & 9 & 10 & 5 & 9 & 2 & 7 & 3 \\
10 & 4 & 11 & 3 & 4 & 14 & 3 & 14 & 10
& 4 & 16 & 5 & 9 & 12 & 2 & 7 & 2 \\
8 & 13 & 12 & 11 & 15 & 5 & 9 & 2 & 7
& 15 & 0 & 9 & 12 & 3 & 10 & 15 & 13 \\
16 & 16 & 13 & 2 & 7 & 8 & 5 & 16 & 8
& 9 & 14 & 2 & 2 & 10 & 11 & 16 & 10 \\
2 & 11 & 1 & 1 & 5 & 8 & 5 & 7 & 2 &
13 & 4 & 7 & 7 & 15 & 16 & 15 & 5 \\
0 & 7 & 4 & 10 & 6 & 15 & 13 & 2 & 12
& 6 & 3 & 3 & 2 & 13 & 10 & 5 & 14
\end{smatrix}.$$

\subsection{Quadratic residue codes over $GF(q)$. }
In addition to our results of self-dual codes over $GF(13)$ and $GF(17)$, we want to construct self-dual codes over other finite fields. In \cite{betsumiya2003}, it is reported that some optimal self-dual codes are obtained from quadratic residue codes following \cite[Theorem 15]{deboer1996}. We also obtain new quadratic residue codes in the following theorem. Among them, $[32, 16, 14]$ code over $GF(19)$, $[20, 10, 10]$ code over $GF(23)$,  $[24, 12, 12]$ code over $GF(29)$, and $[24, 12, 12]$ over $GF(41)$ give the best-known minimum weights which were unknown so far. The new results are updated in Table \ref{former_result_B} and \ref{former_result_C}, and their generator matrices are given in Appendix \ref{QR codes}.

\begin{theorem}{\label{QR}}
The following quadratic residue codes are self-dual:

\begin{tabular}{ll}
	$[24, 12, 10]$ code over $GF(13)$, 	& $[32, 16, 14]$ code over $GF(19)$,\\
	$[20, 10, 10]$ code over $GF(23)$,  &
	$[24, 12, 12]$ code over $GF(29)$, \\
	$[24, 12, 12]$ code over $GF(31)$,&
	$[24, 12, 12]$ code over $GF(41)$,\\
	$[32, 16, 14]$ code over $GF(41)$. &\\
\end{tabular}

%

\end{theorem}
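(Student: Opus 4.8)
\textbf{Proof proposal for Theorem~\ref{QR}.}
The plan is to verify each listed quadratic residue (QR) code individually, since the statement is a finite list of specific codes over specific fields. For each prime power $q$ and each length $2n$ in the table, the relevant QR code is a concrete cyclic code whose defining set is determined by the quadratic residues modulo the length. First I would recall the standard construction: for the QR code of length $\ell$ over $GF(q)$, one needs $q$ to be a quadratic residue modulo $\ell$ so that the splitting of $x^\ell - 1$ into residue and non-residue factors is defined over $GF(q)$; this is the hypothesis underlying \cite[Theorem 15]{deboer1996}. I would compute the idempotent generators (or equivalently the residue/non-residue polynomial factors of $x^\ell-1$) and form the generator matrix of the QR code.

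The self-duality claim reduces to a duality property of the idempotents. The key step is to show, for each case, that the extended or pure QR code $\CC$ satisfies $\CC = \CC^\perp$, which by the theory of QR codes amounts to checking that the product of the residue idempotent with its reciprocal annihilates appropriately---concretely, that the generator polynomial $g(x)$ of $\CC$ and the generator polynomial of $\CC^\perp$ generate the same code. Since each code has even length $2n$ with dimension exactly $n$, the dimension is automatically half the length, so the only substantive content is the \emph{orthogonality} $\CC \subseteq \CC^\perp$, after which equality follows from the dimension count. I would verify $\CC \subseteq \CC^\perp$ by confirming that $g(x) g^*(x) \equiv 0 \pmod{x^{2n}-1}$, where $g^*$ is the reciprocal polynomial, or equivalently by a direct inner-product check $G G^T = 0$ on the generator matrix $G$.

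The minimum-weight assertions are the computational heart of the theorem and are established by direct calculation in \textsc{Magma} \cite{Magma}: for each code one computes the minimum distance of the associated $[2n,n]$ cyclic code and reads off the values $10, 14, 10, 12, 12, 12, 14$ as listed. The explicit generator matrices supplied in Appendix~\ref{QR codes} serve as the certificates for these computations, so that the reader can independently confirm both the self-duality (via $GG^T = 0$) and the minimum weight (via a weight enumerator or minimum-distance routine).

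The main obstacle I anticipate is not the self-duality---which is a clean algebraic consequence of the QR idempotent structure together with the congruence conditions on $q$ modulo the length---but rather the minimum-distance determination for the longer codes, particularly the $[32,16,14]$ codes over $GF(19)$ and $GF(41)$. For these the search space is large, and establishing that the minimum weight is exactly $14$ (rather than merely at least $14$) requires either an exhaustive or a sufficiently clever minimum-weight computation; this is precisely where the \textsc{Magma} computation does the essential work, and where the novelty of the result lies, since these bounds were previously unknown.
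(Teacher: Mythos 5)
Your operative plan --- check $GG^T=0$ on the explicit generator matrices, get equality $\CC=\CC^\perp$ from the dimension count, and compute the minimum distances in \textsc{Magma} --- is exactly how the paper establishes Theorem~\ref{QR}: the paper gives no written proof at all, relying on the \textsc{Magma} computations with the matrices of Appendix~C as certificates. So at the computational level your proposal coincides with the paper.

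However, the algebraic justification you sketch contains a genuine error that would fail if carried out as stated. The codes in the list have even length, so they are necessarily \emph{extended} QR codes; they are not cyclic, and there is no ``associated $[2n,n]$ cyclic code'' --- the underlying cyclic QR code has odd prime length $\ell=2n-1$ and dimension $n>\ell/2$, so it cannot be self-orthogonal, and your criterion $g(x)g^*(x)\equiv 0 \pmod{x^{2n}-1}$ is ill-posed (no generator polynomial modulo $x^{2n}-1$ exists). More importantly, your claim that self-duality is ``a clean algebraic consequence of the QR idempotent structure together with the congruence conditions on $q$ modulo the length'' is precisely the misapplication of \cite[Theorem 15]{deboer1996} that the paper corrects in Remark~\ref{remark1}: the extended QR code over $GF(13)$ of length $18$ (so $\ell=17\equiv 1 \pmod 4$) satisfies the residue condition on $q$ yet is only iso-dual, not self-dual, which is why the paper demotes the previously reported $[18,9,9]$ entry. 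Self-duality of the extension additionally requires $\ell\equiv 3\pmod 4$ (so that each odd-like QR code is dual to its own even-like subcode rather than to its companion's) together with a suitably scaled extension column, which in odd characteristic needs $-\ell$ to be a square in $GF(q)$; see \cite[Chap.~6.6]{HP3}, in particular the discussion around Theorem~6.6.18 cited in the paper. All seven listed cases have $\ell\in\{19,23,31\}\equiv 3\pmod 4$ and satisfy the square condition, which is why they pass while the length-$18$ code over $GF(13)$ and the length-$14$ code over $GF(17)$ in the paper's remarks do not. Your fallback direct check $GG^T=0$ rescues the argument, but as written the ``clean algebraic'' route would have reproduced the error in the prior literature rather than a proof.
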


\begin{remark} \label{remark1}   
The $[18,9,9]$ linear code, quadratic residue code over $GF(13)$ of length 18, is reported as an optimal self-dual code of that parameter in \cite{betsumiya2003} referring \cite[Theorem 15]{deboer1996}. But we point out that the quadratic residue code over $GF(13)$ of length 18 is not self-dual, which have a generator matrix in the standard form $(I \mid A)$ where
$$A= \begin{smatrix}
1 & 8 & 10 & 11 & 4 & 11 & 10 & 8 & 4 \\
5 & 2 & 6 & 0 & 5 & 7 & 9 & 11 & 11 \\
2 & 8 & 9 & 2 & 8 & 1 & 1 & 12 & 6 \\
1 & 10 & 5 & 7 & 6 & 6 & 11 & 9 & 10 \\
4 & 7 & 11 & 10 & 10 & 11 & 7 & 4 & 0 \\
9 & 11 & 6 & 6 & 7 & 5 & 10 & 1 & 10 \\
12 & 1 & 1 & 8 & 2 & 9 & 8 & 2 & 6 \\
11 & 9 & 7 & 5 & 0 & 6 & 2 & 5 & 11 \\
8 & 10 & 11 & 4 & 11 & 10 & 8 & 1 & 4
\end{smatrix}.$$

For the details of the self-duality of quadratic residue codes, we refer \cite[Chap. 6.6]{HP3}. Theorem 6.6.18 in \cite{HP3} implies that quadratic residue code over $GF(13)$ of length 18 is an iso-dual code, i.e., the code is equivalent to its dual. Therefore, the existence of optimal self-dual code over $GF(13)$ of length 18 turns out unknown, and that is the reason why we inscribe the `?' in Table 4.
\end{remark}    

\begin{remark}    
We also point out that the quadratic residue code over $GF(17)$ of length 14 is MDS but isodual code with a generator matrix in the standard form $(I \mid A)$ where
$$A=\begin{smatrix}
1 & 5 & 2 & 4 & 2 & 5 & 10 \\
12 & 10 & 12 & 16 & 11 & 11 & 11 \\
6 & 8 & 5 & 2 & 11 & 7 & 3 \\
10 & 5 & 11 & 11 & 5 & 10 & 1 \\
7 & 11 & 2 & 5 & 8 & 6 & 3 \\
11 & 11 & 16 & 12 & 10 & 12 & 11 \\
5 & 2 & 4 & 2 & 5 & 1 & 10
\end{smatrix}.$$

\end{remark}

\section{Conclusions}

In this paper, we introduced a new construction method of symmetric self-dual codes. Using this construction method, we have constructed many new self-dual codes. We also obtained new quadratic residue codes. Consequently, we improved the bounds of the highest minimum weights of self-dual codes over some finite fields, which stayed unknown for almost two decades because of their computational complexity. Our computational results give twenty new highest minimum weights of self-dual codes and 2967 new self-dual codes up to equivalence. The highest minimum weights of self-dual over $GF(q)$ where $q \equiv 3 \pmod 4$ will be improved in our following works.





\appendix

\section{\\Self-dual codes over $GF(13)$}\label{GF(13)}



\begin{itemize}
\item Symmetric self-dual [26,13,10] code over $GF(13)$ with $(I_{13} \mid A_{13}^{26,1})$ where \\
$$A_{13}^{26,1}= \begin{smatrix}
7 & 7 & 1 & 8 & 3 & 6 & 3 & 8 & 10 &
10 & 10 & 0 & 9 \\
7 & 8 & 10 & 8 & 7 & 5 & 7 & 8 & 8 &
11 & 7 & 0 & 4 \\
1 & 10 & 11 & 11 & 10 & 9 & 5 & 7 & 10
& 4 & 8 & 7 & 11 \\
8 & 8 & 11 & 12 & 7 & 11 & 3 & 12 & 4
& 12 & 11 & 8 & 11 \\
3 & 7 & 10 & 7 & 10 & 0 & 8 & 12 & 12
& 7 & 10 & 10 & 1 \\
6 & 5 & 9 & 11 & 0 & 8 & 5 & 7 & 3 &
11 & 8 & 4 & 8 \\
3 & 7 & 5 & 3 & 8 & 5 & 3 & 4 & 11 &
5 & 6 & 11 & 6 \\
8 & 8 & 7 & 12 & 12 & 7 & 4 & 8 & 0
& 4 & 3 & 1 & 9 \\
10 & 8 & 10 & 4 & 12 & 3 & 11 & 0 & 4
& 8 & 3 & 10 & 7 \\
10 & 11 & 4 & 12 & 7 & 11 & 5 & 4 & 8
& 5 & 9 & 1 & 4 \\
10 & 7 & 8 & 11 & 10 & 8 & 6 & 3 & 3
& 9 & 11 & 0 & 8 \\
0 & 0 & 7 & 8 & 10 & 4 & 11 & 1 & 10
& 1 & 0 & 5 & 4 \\
9 & 4 & 11 & 11 & 1 & 8 & 6 & 9 & 7
& 4 & 8 & 4 & 10
\end{smatrix}$$


\item Symmetric self-dual [28,14,11] code over $GF(13)$ with $(I_{14} \mid A_{13}^{28,1})$ where \\
$$A_{13}^{28,1}= \begin{smatrix}
4 & 2 & 10 & 8 & 6 & 3 & 1 & 12 & 1 & 11 & 8 & 9 & 11 & 2 \\
2 & 6 & 2 & 10 & 5 & 8 & 12 & 10 & 1 & 11 & 6 & 12 & 1 & 8 \\
10 & 2 & 9 & 3 & 6 & 6 & 9 & 3 & 12 & 0 & 4 & 4 & 5 & 12 \\
8 & 10 & 3 & 8 & 12 & 4 & 7 & 7 & 5 & 1 & 1 & 3 & 11 & 7 \\
6 & 5 & 6 & 12 & 3 & 9 & 3 & 11 & 4 & 7 & 0 & 4 & 11 & 8 \\
3 & 8 & 6 & 4 & 9 & 11 & 9 & 12 & 8 & 7 & 1 & 0 & 5 & 6 \\
1 & 12 & 9 & 7 & 3 & 9 & 11 & 2 & 10 & 10 & 9 & 9 & 11 & 1 \\
12 & 10 & 3 & 7 & 11 & 12 & 2 & 6 & 1 & 4 & 7 & 5 & 4 & 0 \\
1 & 1 & 12 & 5 & 4 & 8 & 10 & 1 & 11 & 7 & 2 & 4 & 8 & 2 \\
11 & 11 & 0 & 1 & 7 & 7 & 10 & 4 & 7 & 3 & 12 & 1 & 9 & 8 \\
8 & 6 & 4 & 1 & 0 & 1 & 9 & 7 & 2 & 12 & 2 & 4 & 5 & 0 \\
9 & 12 & 4 & 3 & 4 & 0 & 9 & 5 & 4 & 1 & 4 & 7 & 11 & 10 \\
11 & 1 & 5 & 11 & 11 & 5 & 11 & 4 & 8 & 9 & 5 & 11 & 4 & 5 \\
2 & 8 & 12 & 7 & 8 & 6 & 1 & 0 & 2 & 8 & 0 & 10 & 5 & 9
\end{smatrix}.$$


\item Symmetric self-dual [30,15,11] code over $GF(13)$ with $(I_{15} \mid A_{13}^{30,1})$ where \\
$$A_{13}^{30,1}= \begin{smatrix}
11 & 10 & 8 & 9 & 2 & 1 & 4 & 12 & 12 & 7 & 12 & 2 & 2 & 6 & 6 \\
10 & 7 & 7 & 1 & 6 & 5 & 12 & 2 & 0 & 7 & 12 & 6 & 7 & 5 & 9 \\
8 & 7 & 10 & 0 & 11 & 12 & 10 & 5 & 3 & 11 & 4 & 7 & 0 & 4 & 11 \\
9 & 1 & 0 & 10 & 9 & 9 & 5 & 6 & 0 & 7 & 10 & 10 & 10 & 10 & 4 \\
2 & 6 & 11 & 9 & 5 & 4 & 11 & 2 & 2 & 1 & 9 & 11 & 0 & 2 & 11 \\
1 & 5 & 12 & 9 & 4 & 12 & 6 & 7 & 2 & 2 & 11 & 5 & 9 & 0 & 10 \\
4 & 12 & 10 & 5 & 11 & 6 & 12 & 12 & 2 & 0 & 10 & 8 & 7 & 0 & 1 \\
12 & 2 & 5 & 6 & 2 & 7 & 12 & 7 & 11 & 12 & 6 & 4 & 4 & 9 & 12 \\
12 & 0 & 3 & 0 & 2 & 2 & 2 & 11 & 2 & 3 & 0 & 2 & 0 & 2 & 11 \\
7 & 7 & 11 & 7 & 1 & 2 & 0 & 12 & 3 & 10 & 9 & 11 & 0 & 9 & 3 \\
12 & 12 & 4 & 10 & 9 & 11 & 10 & 6 & 0 & 9 & 12 & 7 & 9 & 7 & 6 \\
2 & 6 & 7 & 10 & 11 & 5 & 8 & 4 & 2 & 11 & 7 & 12 & 1 & 9 & 4 \\
2 & 7 & 0 & 10 & 0 & 9 & 7 & 4 & 0 & 0 & 9 & 1 & 4 & 2 & 1 \\
6 & 5 & 4 & 10 & 2 & 0 & 0 & 9 & 2 & 9 & 7 & 9 & 2 & 3 & 4 \\
6 & 9 & 11 & 4 & 11 & 10 & 1 & 12 & 11 & 3 & 6 & 4 & 1 & 4 & 8
\end{smatrix}$$

\item Symmetric self-dual [32,16,12] code  over $GF(13)$ with $(I_{16} \mid A_{13}^{32,1})$ where \\
$$A_{13}^{32,1}= \begin{smatrix}
11 & 5 & 8 & 5 & 2 & 7 & 11 & 11 & 10 & 12 & 2 & 11 & 12 & 3 & 4 & 7 \\
5 & 2 & 6 & 12 & 8 & 5 & 2 & 5 & 7 & 6 & 6 & 0 & 9 & 7 & 4 & 9 \\
8 & 6 & 11 & 3 & 2 & 3 & 4 & 11 & 7 & 6 & 8 & 11 & 12 & 2 & 7 & 6 \\
5 & 12 & 3 & 1 & 12 & 1 & 0 & 11 & 0 & 10 & 10 & 5 & 1 & 5 & 2 & 1 \\
2 & 8 & 2 & 12 & 7 & 5 & 12 & 8 & 4 & 8 & 4 & 0 & 5 & 12 & 4 & 0 \\
7 & 5 & 3 & 1 & 5 & 4 & 8 & 2 & 8 & 4 & 10 & 0 & 0 & 7 & 7 & 10 \\
11 & 2 & 4 & 0 & 12 & 8 & 9 & 3 & 9 & 7 & 5 & 8 & 10 & 7 & 6 & 1 \\
11 & 5 & 11 & 11 & 8 & 2 & 3 & 9 & 1 & 7 & 3 & 7 & 0 & 5 & 6 & 5 \\
10 & 7 & 7 & 0 & 4 & 8 & 9 & 1 & 10 & 12 & 10 & 8 & 5 & 1 & 5 & 5 \\
12 & 6 & 6 & 10 & 8 & 4 & 7 & 7 & 12 & 11 & 11 & 5 & 11 & 12 & 5 & 0 \\
2 & 6 & 8 & 10 & 4 & 10 & 5 & 3 & 10 & 11 & 7 & 12 & 6 & 2 & 3 & 12 \\
11 & 0 & 11 & 5 & 0 & 0 & 8 & 7 & 8 & 5 & 12 & 9 & 12 & 7 & 0 & 10 \\
12 & 9 & 12 & 1 & 5 & 0 & 10 & 0 & 5 & 11 & 6 & 12 & 8 & 0 & 12 & 6 \\
3 & 7 & 2 & 5 & 12 & 7 & 7 & 5 & 1 & 12 & 2 & 7 & 0 & 7 & 6 & 8 \\
4 & 4 & 7 & 2 & 4 & 7 & 6 & 6 & 5 & 5 & 3 & 0 & 12 & 6 & 4 & 9 \\
7 & 9 & 6 & 1 & 0 & 10 & 1 & 5 & 5 & 0 & 12 & 10 & 6 & 8 & 9 & 7
\end{smatrix}$$

\item Symmetric self-dual [34,17,12] code  over $GF(13)$ with $(I_{17} \mid A_{13}^{34,1})$ where \\
$$A_{13}^{34,1}= \begin{smatrix}
1 & 0 & 3 & 7 & 5 & 1 & 10 & 11 & 3 & 7 & 2 & 10 & 12 & 2 & 6 & 12 & 10 \\
0 & 11 & 5 & 8 & 5 & 2 & 7 & 11 & 11 & 10 & 12 & 2 & 11 & 12 & 3 & 4 & 7 \\
3 & 5 & 3 & 4 & 5 & 4 & 4 & 10 & 6 & 5 & 11 & 5 & 4 & 1 & 9 & 8 & 8 \\
7 & 8 & 4 & 2 & 4 & 10 & 5 & 1 & 9 & 11 & 9 & 10 & 3 & 2 & 11 & 12 & 8 \\
5 & 5 & 5 & 4 & 11 & 1 & 8 & 9 & 4 & 1 & 1 & 4 & 3 & 5 & 4 & 0 & 8 \\
1 & 2 & 4 & 10 & 1 & 10 & 9 & 6 & 4 & 12 & 1 & 8 & 10 & 11 & 4 & 1 & 4 \\
10 & 7 & 4 & 5 & 8 & 9 & 5 & 0 & 1 & 10 & 12 & 11 & 9 & 8 & 5 & 3 & 11 \\
11 & 11 & 10 & 1 & 9 & 6 & 0 & 8 & 11 & 6 & 8 & 10 & 1 & 11 & 10 & 12 & 6 \\
3 & 11 & 6 & 9 & 4 & 4 & 1 & 11 & 10 & 12 & 12 & 2 & 11 & 5 & 7 & 10 & 4 \\
7 & 10 & 5 & 11 & 1 & 12 & 10 & 6 & 12 & 1 & 2 & 12 & 0 & 8 & 10 & 10 & 7 \\
2 & 12 & 11 & 9 & 1 & 1 & 12 & 8 & 12 & 2 & 10 & 6 & 12 & 10 & 9 & 12 & 8 \\
10 & 2 & 5 & 10 & 4 & 8 & 11 & 10 & 2 & 12 & 6 & 8 & 8 & 1 & 0 & 12 & 0 \\
12 & 11 & 4 & 3 & 3 & 10 & 9 & 1 & 11 & 0 & 12 & 8 & 12 & 6 & 2 & 3 & 6 \\
2 & 12 & 1 & 2 & 5 & 11 & 8 & 11 & 5 & 8 & 10 & 1 & 6 & 7 & 10 & 6 & 1 \\
6 & 3 & 9 & 11 & 4 & 4 & 5 & 10 & 7 & 10 & 9 & 0 & 2 & 10 & 11 & 1 & 6 \\
12 & 4 & 8 & 12 & 0 & 1 & 3 & 12 & 10 & 10 & 12 & 12 & 3 & 6 & 1 & 7 & 5 \\
10 & 7 & 8 & 8 & 8 & 4 & 11 & 6 & 4 & 7 & 8 & 0 & 6 & 1 & 6 & 5 & 8
\end{smatrix}$$

\item Symmetric self-dual [36,18,13] code  over $GF(13)$ with $(I_{18} \mid A_{13}^{36,1})$ where \\
$$A_{13}^{36,1}= \begin{smatrix}
6 & 3 & 1 & 1 & 5 & 8 & 1 & 6 & 3 & 1 & 4 & 1 & 1 & 3 & 11 & 8 & 2 & 4 \\
3 & 3 & 5 & 8 & 6 & 6 & 6 & 1 & 0 & 8 & 1 & 7 & 2 & 1 & 5 & 7 & 9 & 4 \\
1 & 5 & 4 & 11 & 12 & 1 & 8 & 4 & 3 & 4 & 8 & 5 & 8 & 3 & 0 & 12 & 3 & 5 \\
1 & 8 & 11 & 9 & 8 & 1 & 10 & 1 & 2 & 12 & 3 & 4 & 11 & 9 & 2 & 5 & 7 & 6 \\
5 & 6 & 12 & 8 & 9 & 10 & 1 & 3 & 0 & 0 & 1 & 0 & 1 & 2 & 7 & 4 & 7 & 11 \\
8 & 6 & 1 & 1 & 10 & 5 & 10 & 10 & 10 & 0 & 11 & 10 & 0 & 4 & 0 & 11 & 5 & 5 \\
1 & 6 & 8 & 10 & 1 & 10 & 3 & 6 & 11 & 10 & 10 & 7 & 1 & 2 & 12 & 0 & 0 & 2 \\
6 & 1 & 4 & 1 & 3 & 10 & 6 & 0 & 4 & 11 & 11 & 9 & 8 & 0 & 1 & 7 & 10 & 12 \\
3 & 0 & 3 & 2 & 0 & 10 & 11 & 4 & 10 & 3 & 0 & 0 & 2 & 3 & 1 & 11 & 9 & 0 \\
1 & 8 & 4 & 12 & 0 & 0 & 10 & 11 & 3 & 3 & 10 & 5 & 8 & 3 & 6 & 3 & 9 & 2 \\
4 & 1 & 8 & 3 & 1 & 11 & 10 & 11 & 0 & 10 & 6 & 0 & 10 & 7 & 12 & 7 & 6 & 12 \\
1 & 7 & 5 & 4 & 0 & 10 & 7 & 9 & 0 & 5 & 0 & 3 & 12 & 4 & 11 & 5 & 11 & 6 \\
1 & 2 & 8 & 11 & 1 & 0 & 1 & 8 & 2 & 8 & 10 & 12 & 1 & 0 & 2 & 9 & 11 & 11 \\
3 & 1 & 3 & 9 & 2 & 4 & 2 & 0 & 3 & 3 & 7 & 4 & 0 & 1 & 9 & 3 & 0 & 0 \\
11 & 5 & 0 & 2 & 7 & 0 & 12 & 1 & 1 & 6 & 12 & 11 & 2 & 9 & 5 & 5 & 8 & 5 \\
8 & 7 & 12 & 5 & 4 & 11 & 0 & 7 & 11 & 3 & 7 & 5 & 9 & 3 & 5 & 5 & 6 & 3 \\
2 & 9 & 3 & 7 & 7 & 5 & 0 & 10 & 9 & 9 & 6 & 11 & 11 & 0 & 8 & 6 & 5 & 1 \\
4 & 4 & 5 & 6 & 11 & 5 & 2 & 12 & 0 & 2 & 12 & 6 & 11 & 0 & 5 & 3 & 1 & 0
\end{smatrix}$$

\item Symmetric self-dual [38,19,13] code over $GF(13)$ with $(I_{19} \mid A_{13}^{38,1})$ where\\ 
$$A_{13}^{38,1}= \begin{smatrix}
3 & 8 & 0 & 3 & 2 & 11 & 6 & 8 & 3 & 9 & 3 & 7 & 1 & 7 & 2 & 8 & 11 & 9 & 2 \\
8 & 0 & 3 & 2 & 6 & 0 & 10 & 8 & 7 & 6 & 2 & 2 & 10 & 12 & 8 & 5 & 3 & 5 & 9 \\
0 & 3 & 3 & 5 & 8 & 6 & 6 & 6 & 1 & 0 & 8 & 1 & 7 & 2 & 1 & 5 & 7 & 9 & 4 \\
3 & 2 & 5 & 6 & 8 & 2 & 5 & 9 & 6 & 9 & 6 & 4 & 10 & 4 & 0 & 1 & 2 & 9 & 2 \\
2 & 6 & 8 & 8 & 7 & 10 & 8 & 2 & 11 & 6 & 9 & 9 & 3 & 4 & 7 & 7 & 7 & 11 & 4 \\
11 & 0 & 6 & 2 & 10 & 7 & 3 & 9 & 6 & 9 & 3 & 8 & 1 & 8 & 4 & 2 & 2 & 3 & 0 \\
6 & 10 & 6 & 5 & 8 & 3 & 0 & 12 & 1 & 9 & 4 & 3 & 7 & 5 & 11 & 2 & 4 & 4 & 12 \\
8 & 8 & 6 & 9 & 2 & 9 & 12 & 10 & 7 & 1 & 11 & 8 & 3 & 12 & 7 & 6 & 8 & 3 & 7 \\
3 & 7 & 1 & 6 & 11 & 6 & 1 & 7 & 2 & 10 & 0 & 7 & 1 & 4 & 10 & 2 & 10 & 3 & 9 \\
9 & 6 & 0 & 9 & 6 & 9 & 9 & 1 & 10 & 2 & 9 & 1 & 2 & 3 & 7 & 4 & 7 & 1 & 4 \\
3 & 2 & 8 & 6 & 9 & 3 & 4 & 11 & 0 & 9 & 5 & 6 & 10 & 4 & 0 & 7 & 6 & 2 & 12 \\
7 & 2 & 1 & 4 & 9 & 8 & 3 & 8 & 7 & 1 & 6 & 1 & 3 & 5 & 0 & 10 & 1 & 7 & 5 \\
1 & 10 & 7 & 10 & 3 & 1 & 7 & 3 & 1 & 2 & 10 & 3 & 9 & 2 & 3 & 7 & 6 & 0 & 5 \\
7 & 12 & 2 & 4 & 4 & 8 & 5 & 12 & 4 & 3 & 4 & 5 & 2 & 9 & 6 & 0 & 3 & 12 & 4 \\
2 & 8 & 1 & 0 & 7 & 4 & 11 & 7 & 10 & 7 & 0 & 0 & 3 & 6 & 12 & 1 & 5 & 4 & 11 \\
8 & 5 & 5 & 1 & 7 & 2 & 2 & 6 & 2 & 4 & 7 & 10 & 7 & 0 & 1 & 12 & 0 & 11 & 10 \\
11 & 3 & 7 & 2 & 7 & 2 & 4 & 8 & 10 & 7 & 6 & 1 & 6 & 3 & 5 & 0 & 3 & 2 & 5 \\
9 & 5 & 9 & 9 & 11 & 3 & 4 & 3 & 3 & 1 & 2 & 7 & 0 & 12 & 4 & 11 & 2 & 10 & 5 \\
2 & 9 & 4 & 2 & 4 & 0 & 12 & 7 & 9 & 4 & 12 & 5 & 5 & 4 & 11 & 10 & 5 & 5 & 11
\end{smatrix}$$

\item Symmetric self-dual [40,20,14] code over $GF(13)$ with $(I_{20} \mid A_{13}^{40,1})$ where \\
$$A_{13}^{40,1}= \begin{smatrix}
8 & 5 & 10 & 5 & 4 & 1 & 8 & 1 & 2 & 3 & 4 & 11 & 5 & 8 & 6 & 3 & 2 & 12 & 9 & 3 \\
5 & 7 & 3 & 4 & 1 & 8 & 7 & 12 & 7 & 8 & 7 & 4 & 11 & 10 & 4 & 7 & 7 & 5 & 11 & 7 \\
10 & 3 & 3 & 11 & 11 & 5 & 5 & 9 & 6 & 4 & 2 & 4 & 10 & 2 & 6 & 5 & 3 & 4 & 9 & 6 \\
5 & 4 & 11 & 7 & 3 & 1 & 2 & 12 & 5 & 6 & 11 & 9 & 5 & 3 & 12 & 6 & 4 & 1 & 11 & 9 \\
4 & 1 & 11 & 3 & 7 & 5 & 4 & 2 & 3 & 10 & 10 & 12 & 2 & 12 & 12 & 4 & 8 & 5 & 8 & 6 \\
1 & 8 & 5 & 1 & 5 & 3 & 4 & 4 & 7 & 12 & 3 & 4 & 2 & 10 & 6 & 8 & 12 & 11 & 1 & 5 \\
8 & 7 & 5 & 2 & 4 & 4 & 11 & 10 & 10 & 1 & 11 & 2 & 4 & 5 & 11 & 12 & 3 & 8 & 1 & 8 \\
1 & 12 & 9 & 12 & 2 & 4 & 10 & 9 & 4 & 2 & 6 & 12 & 9 & 1 & 7 & 12 & 7 & 8 & 7 & 0 \\
2 & 7 & 6 & 5 & 3 & 7 & 10 & 4 & 7 & 9 & 8 & 1 & 7 & 4 & 3 & 9 & 3 & 3 & 9 & 9 \\
3 & 8 & 4 & 6 & 10 & 12 & 1 & 2 & 9 & 5 & 1 & 11 & 12 & 9 & 10 & 0 & 4 & 9 & 12 & 12 \\
4 & 7 & 2 & 11 & 10 & 3 & 11 & 6 & 8 & 1 & 3 & 2 & 12 & 4 & 11 & 11 & 11 & 10 & 0 & 8 \\
11 & 4 & 4 & 9 & 12 & 4 & 2 & 12 & 1 & 11 & 2 & 2 & 7 & 9 & 0 & 11 & 10 & 11 & 9 & 10 \\
5 & 11 & 10 & 5 & 2 & 2 & 4 & 9 & 7 & 12 & 12 & 7 & 5 & 12 & 2 & 5 & 9 & 8 & 9 & 10 \\
8 & 10 & 2 & 3 & 12 & 10 & 5 & 1 & 4 & 9 & 4 & 9 & 12 & 0 & 5 & 11 & 8 & 12 & 11 & 0 \\
6 & 4 & 6 & 12 & 12 & 6 & 11 & 7 & 3 & 10 & 11 & 0 & 2 & 5 & 8 & 12 & 4 & 1 & 4 & 10 \\
3 & 7 & 5 & 6 & 4 & 8 & 12 & 12 & 9 & 0 & 11 & 11 & 5 & 11 & 12 & 2 & 3 & 4 & 0 & 1 \\
2 & 7 & 3 & 4 & 8 & 12 & 3 & 7 & 3 & 4 & 11 & 10 & 9 & 8 & 4 & 3 & 9 & 8 & 4 & 12 \\
12 & 5 & 4 & 1 & 5 & 11 & 8 & 8 & 3 & 9 & 10 & 11 & 8 & 12 & 1 & 4 & 8 & 12 & 12 & 4 \\
9 & 11 & 9 & 11 & 8 & 1 & 1 & 7 & 9 & 12 & 0 & 9 & 9 & 11 & 4 & 0 & 4 & 12 & 11 & 1 \\
3 & 7 & 6 & 9 & 6 & 5 & 8 & 0 & 9 & 12 & 8 & 10 & 10 & 0 & 10 & 1 & 12 & 4 & 1 & 1
\end{smatrix}$$
\end{itemize}

\section{\\Self-dual codes over $GF(17)$}\label{GF(17)}

\begin{itemize}
\item Symmetric self-dual [28,14,10] code over $GF(17)$ with $(I_{14} \mid A_{17}^{28,1})$ where \\
$$A_{17}^{28,1}= \begin{smatrix}
4 & 2 & 4 & 9 & 9 & 7 & 16 & 7 & 13 & 4 & 14 & 11 & 1 & 7 \\
2 & 14 & 16 & 14 & 12 & 3 & 1 & 0 & 3 & 0 & 5 & 3 & 4 & 16 \\
4 & 16 & 4 & 2 & 0 & 5 & 16 & 13 & 2 & 3 & 12 & 9 & 16 & 2 \\
9 & 14 & 2 & 16 & 12 & 0 & 15 & 14 & 8 & 16 & 7 & 14 & 11 & 9 \\
9 & 12 & 0 & 12 & 12 & 7 & 0 & 4 & 13 & 2 & 10 & 1 & 9 & 1 \\
7 & 3 & 5 & 0 & 7 & 13 & 12 & 5 & 2 & 7 & 14 & 5 & 2 & 13 \\
16 & 1 & 16 & 15 & 0 & 12 & 4 & 14 & 11 & 8 & 9 & 8 & 11 & 1 \\
7 & 0 & 13 & 14 & 4 & 5 & 14 & 13 & 8 & 11 & 5 & 8 & 16 & 3 \\
13 & 3 & 2 & 8 & 13 & 2 & 11 & 8 & 14 & 9 & 12 & 9 & 9 & 6 \\
4 & 0 & 3 & 16 & 2 & 7 & 8 & 11 & 9 & 3 & 1 & 16 & 10 & 11 \\
14 & 5 & 12 & 7 & 10 & 14 & 9 & 5 & 12 & 1 & 7 & 4 & 14 & 1 \\
11 & 3 & 9 & 14 & 1 & 5 & 8 & 8 & 9 & 16 & 4 & 6 & 11 & 4 \\
1 & 4 & 16 & 11 & 9 & 2 & 11 & 16 & 9 & 10 & 14 & 11 & 15 & 1 \\
7 & 16 & 2 & 9 & 1 & 13 & 1 & 3 & 6 & 11 & 1 & 4 & 1 & 11
\end{smatrix}$$,


\item Symmetric self-dual [30,15,12] code over $GF(17)$ with $(I_{15} \mid A_{17}^{30,1})$ where \\
$$A_{17}^{30,1}= \begin{smatrix}
14 & 14 & 14 & 0 & 0 & 15 & 9 & 9 & 8 & 1 & 12 & 1 & 2 & 8 & 15 \\
14 & 13 & 11 & 4 & 9 & 15 & 14 & 6 & 0 & 10 & 2 & 11 & 5 & 11 & 13 \\
14 & 11 & 6 & 16 & 14 & 1 & 10 & 8 & 10 & 0 & 15 & 2 & 14 & 14 & 5 \\
0 & 4 & 16 & 4 & 2 & 0 & 5 & 16 & 13 & 2 & 3 & 12 & 9 & 16 & 2 \\
0 & 9 & 14 & 2 & 16 & 12 & 0 & 15 & 14 & 8 & 16 & 7 & 14 & 11 & 9 \\
15 & 15 & 1 & 0 & 12 & 16 & 6 & 16 & 5 & 11 & 12 & 8 & 14 & 10 & 5 \\
9 & 14 & 10 & 5 & 0 & 6 & 9 & 8 & 9 & 11 & 13 & 6 & 6 & 6 & 12 \\
9 & 6 & 8 & 16 & 15 & 16 & 8 & 0 & 1 & 3 & 14 & 1 & 9 & 15 & 0 \\
8 & 0 & 10 & 13 & 14 & 5 & 9 & 1 & 9 & 16 & 5 & 13 & 7 & 12 & 4 \\
1 & 10 & 0 & 2 & 8 & 11 & 11 & 3 & 16 & 15 & 4 & 13 & 11 & 0 & 4 \\
12 & 2 & 15 & 3 & 16 & 12 & 13 & 14 & 5 & 4 & 11 & 13 & 6 & 4 & 4 \\
1 & 11 & 2 & 12 & 7 & 8 & 6 & 1 & 13 & 13 & 13 & 8 & 6 & 5 & 16 \\
2 & 5 & 14 & 9 & 14 & 14 & 6 & 9 & 7 & 11 & 6 & 6 & 10 & 10 & 0 \\
8 & 11 & 14 & 16 & 11 & 10 & 6 & 15 & 12 & 0 & 4 & 5 & 10 & 11 & 2 \\
15 & 13 & 5 & 2 & 9 & 5 & 12 & 0 & 4 & 4 & 4 & 16 & 0 & 2 & 15
\end{smatrix}$$

\item Symmetric self-dual [32,16,12] code over $GF(17)$ with $(I_{16} \mid A_{17}^{32,1})$ where \\
$$A_{17}^{32,1}= \begin{smatrix}
11 & 9 & 4 & 10 & 11 & 6 & 4 & 0 & 9 & 7 & 7 & 14 & 4 & 15 & 13 & 7 \\
9 & 11 & 7 & 5 & 2 & 15 & 8 & 9 & 6 & 0 & 10 & 13 & 11 & 14 & 15 & 7 \\
4 & 7 & 8 & 7 & 3 & 10 & 10 & 14 & 16 & 4 & 14 & 10 & 6 & 16 & 16 & 0 \\
10 & 5 & 7 & 13 & 5 & 8 & 14 & 10 & 16 & 3 & 10 & 1 & 15 & 16 & 1 & 15 \\
11 & 2 & 3 & 5 & 14 & 9 & 16 & 5 & 1 & 7 & 13 & 8 & 11 & 1 & 0 & 13 \\
6 & 15 & 10 & 8 & 9 & 9 & 13 & 0 & 13 & 3 & 14 & 11 & 8 & 5 & 10 & 15 \\
4 & 8 & 10 & 14 & 16 & 13 & 11 & 6 & 9 & 9 & 15 & 3 & 3 & 8 & 15 & 9 \\
0 & 9 & 14 & 10 & 5 & 0 & 6 & 9 & 8 & 9 & 11 & 13 & 6 & 6 & 6 & 12 \\
9 & 6 & 16 & 16 & 1 & 13 & 9 & 8 & 14 & 10 & 12 & 15 & 11 & 4 & 5 & 9 \\
7 & 0 & 4 & 3 & 7 & 3 & 9 & 9 & 10 & 16 & 6 & 2 & 0 & 5 & 8 & 11 \\
7 & 10 & 14 & 10 & 13 & 14 & 15 & 11 & 12 & 6 & 5 & 1 & 0 & 9 & 13 & 11 \\
14 & 13 & 10 & 1 & 8 & 11 & 3 & 13 & 15 & 2 & 1 & 5 & 4 & 2 & 13 & 1 \\
4 & 11 & 6 & 15 & 11 & 8 & 3 & 6 & 11 & 0 & 0 & 4 & 3 & 0 & 10 & 3 \\
15 & 14 & 16 & 16 & 1 & 5 & 8 & 6 & 4 & 5 & 9 & 2 & 0 & 13 & 16 & 15 \\
13 & 15 & 16 & 1 & 0 & 10 & 15 & 6 & 5 & 8 & 13 & 13 & 10 & 16 & 6 & 15 \\
7 & 7 & 0 & 15 & 13 & 15 & 9 & 12 & 9 & 11 & 11 & 1 & 3 & 15 & 15 & 5
\end{smatrix}$$

\item Symmetric self-dual [34,17,12] code over $GF(17)$ with $(I_{17} \mid A_{17}^{34,1})$ where \\
$$A_{17}^{34,1}= \begin{smatrix}
1 & 3 & 16 & 5 & 0 & 0 & 0 & 11 & 7 & 7 & 0 & 6 & 6 & 5 & 7 & 2 & 11 \\
3 & 8 & 10 & 16 & 10 & 11 & 6 & 10 & 10 & 2 & 7 & 1 & 8 & 16 & 8 & 11 & 13 \\
16 & 10 & 5 & 3 & 5 & 2 & 15 & 6 & 0 & 14 & 0 & 12 & 15 & 7 & 5 & 10 & 5 \\
5 & 16 & 3 & 11 & 7 & 3 & 10 & 3 & 8 & 10 & 4 & 4 & 0 & 9 & 10 & 7 & 10 \\
0 & 10 & 5 & 7 & 13 & 5 & 8 & 14 & 10 & 16 & 3 & 10 & 1 & 15 & 16 & 1 & 15 \\
0 & 11 & 2 & 3 & 5 & 14 & 9 & 16 & 5 & 1 & 7 & 13 & 8 & 11 & 1 & 0 & 13 \\
0 & 6 & 15 & 10 & 8 & 9 & 9 & 13 & 0 & 13 & 3 & 14 & 11 & 8 & 5 & 10 & 15 \\
11 & 10 & 6 & 3 & 14 & 16 & 13 & 16 & 3 & 6 & 9 & 10 & 15 & 13 & 5 & 2 & 14 \\
7 & 10 & 0 & 8 & 10 & 5 & 0 & 3 & 4 & 3 & 9 & 14 & 16 & 0 & 1 & 7 & 9 \\
7 & 2 & 14 & 10 & 16 & 1 & 13 & 6 & 3 & 9 & 10 & 15 & 1 & 5 & 16 & 6 & 6 \\
0 & 7 & 0 & 4 & 3 & 7 & 3 & 9 & 9 & 10 & 16 & 6 & 2 & 0 & 5 & 8 & 11 \\
6 & 1 & 12 & 4 & 10 & 13 & 14 & 10 & 14 & 15 & 6 & 10 & 6 & 7 & 12 & 9 & 6 \\
6 & 8 & 15 & 0 & 1 & 8 & 11 & 15 & 16 & 1 & 2 & 6 & 10 & 11 & 5 & 9 & 13 \\
5 & 16 & 7 & 9 & 15 & 11 & 8 & 13 & 0 & 5 & 0 & 7 & 11 & 6 & 11 & 1 & 13 \\
7 & 8 & 5 & 10 & 16 & 1 & 5 & 5 & 1 & 16 & 5 & 12 & 5 & 11 & 8 & 0 & 12 \\
2 & 11 & 10 & 7 & 1 & 0 & 10 & 2 & 7 & 6 & 8 & 9 & 9 & 1 & 0 & 16 & 2 \\
11 & 13 & 5 & 10 & 15 & 13 & 15 & 14 & 9 & 6 & 11 & 6 & 13 & 13 & 12 & 2 & 10
\end{smatrix}$$

\item Symmetric self-dual [36,18,13] code over $GF(17)$ with $(I_{18} \mid A_{17}^{36,1})$ where \\
$$A_{17}^{36,1}= \begin{smatrix}
7 & 10 & 4 & 7 & 7 & 6 & 14 & 9 & 5 & 6 & 9 & 8 & 14 & 13 & 7 & 4 & 6 & 14 \\
10 & 6 & 5 & 11 & 0 & 3 & 7 & 13 & 5 & 10 & 3 & 4 & 13 & 4 & 0 & 9 & 5 & 1 \\
4 & 5 & 2 & 8 & 14 & 1 & 7 & 1 & 11 & 1 & 14 & 12 & 14 & 14 & 14 & 2 & 2 & 9 \\
7 & 11 & 8 & 10 & 8 & 2 & 12 & 2 & 12 & 14 & 1 & 13 & 5 & 0 & 12 & 3 & 7 & 15 \\
7 & 0 & 14 & 8 & 16 & 4 & 13 & 14 & 9 & 5 & 14 & 0 & 14 & 2 & 14 & 8 & 4 & 3 \\
6 & 3 & 1 & 2 & 4 & 8 & 16 & 9 & 7 & 5 & 0 & 2 & 4 & 10 & 12 & 7 & 13 & 9 \\
14 & 7 & 7 & 12 & 13 & 16 & 0 & 0 & 11 & 16 & 9 & 16 & 16 & 12 & 4 & 14 & 11 & 16 \\
9 & 13 & 1 & 2 & 14 & 9 & 0 & 2 & 11 & 1 & 6 & 10 & 5 & 16 & 12 & 0 & 11 & 6 \\
5 & 5 & 11 & 12 & 9 & 7 & 11 & 11 & 13 & 13 & 4 & 11 & 5 & 14 & 2 & 6 & 12 & 9 \\
6 & 10 & 1 & 14 & 5 & 5 & 16 & 1 & 13 & 16 & 4 & 8 & 8 & 8 & 14 & 9 & 2 & 3 \\
9 & 3 & 14 & 1 & 14 & 0 & 9 & 6 & 4 & 4 & 2 & 0 & 6 & 6 & 9 & 11 & 7 & 14 \\
8 & 4 & 12 & 13 & 0 & 2 & 16 & 10 & 11 & 8 & 0 & 9 & 15 & 14 & 13 & 10 & 7 & 3 \\
14 & 13 & 14 & 5 & 14 & 4 & 16 & 5 & 5 & 8 & 6 & 15 & 13 & 10 & 0 & 8 & 3 & 9 \\
13 & 4 & 14 & 0 & 2 & 10 & 12 & 16 & 14 & 8 & 6 & 14 & 10 & 4 & 13 & 11 & 1 & 0 \\
7 & 0 & 14 & 12 & 14 & 12 & 4 & 12 & 2 & 14 & 9 & 13 & 0 & 13 & 11 & 9 & 15 & 6 \\
4 & 9 & 2 & 3 & 8 & 7 & 14 & 0 & 6 & 9 & 11 & 10 & 8 & 11 & 9 & 2 & 8 & 8 \\
6 & 5 & 2 & 7 & 4 & 13 & 11 & 11 & 12 & 2 & 7 & 7 & 3 & 1 & 15 & 8 & 11 & 13 \\
14 & 1 & 9 & 15 & 3 & 9 & 16 & 6 & 9 & 3 & 14 & 3 & 9 & 0 & 6 & 8 & 13 & 13
\end{smatrix}$$

\item Symmetric self-dual [38,19,14] code over $GF(17)$ with $(I_{19} \mid A_{17}^{38,1})$ where \\
$$A_{17}^{38,1}= \begin{smatrix}
4 & 1 & 9 & 8 & 8 & 10 & 7 & 13 & 1 & 9 & 1 & 10 & 9 & 0 & 10 & 16 & 5 & 2 & 9 \\
1 & 5 & 9 & 5 & 8 & 4 & 9 & 5 & 7 & 4 & 4 & 6 & 7 & 14 & 10 & 9 & 11 & 2 & 13 \\
9 & 9 & 14 & 14 & 3 & 7 & 13 & 11 & 12 & 13 & 9 & 10 & 12 & 13 & 11 & 1 & 4 & 3 & 9 \\
8 & 5 & 14 & 10 & 16 & 7 & 8 & 3 & 2 & 3 & 2 & 7 & 4 & 14 & 7 & 13 & 7 & 4 & 1 \\
8 & 8 & 3 & 16 & 1 & 1 & 9 & 8 & 3 & 4 & 15 & 11 & 5 & 5 & 10 & 11 & 8 & 9 & 7 \\
10 & 4 & 7 & 7 & 1 & 3 & 0 & 8 & 11 & 16 & 2 & 1 & 7 & 14 & 6 & 0 & 10 & 15 & 10 \\
7 & 9 & 13 & 8 & 9 & 0 & 12 & 4 & 12 & 0 & 8 & 13 & 12 & 4 & 6 & 9 & 5 & 2 & 2 \\
13 & 5 & 11 & 3 & 8 & 8 & 4 & 2 & 8 & 15 & 7 & 4 & 3 & 16 & 7 & 13 & 3 & 10 & 3 \\
1 & 7 & 12 & 2 & 3 & 11 & 12 & 8 & 0 & 10 & 16 & 3 & 9 & 5 & 13 & 14 & 7 & 7 & 5 \\
9 & 4 & 13 & 3 & 4 & 16 & 0 & 15 & 10 & 4 & 12 & 11 & 2 & 5 & 4 & 3 & 1 & 10 & 0 \\
1 & 4 & 9 & 2 & 15 & 2 & 8 & 7 & 16 & 12 & 14 & 1 & 7 & 8 & 5 & 16 & 16 & 15 & 2 \\
10 & 6 & 10 & 7 & 11 & 1 & 13 & 4 & 3 & 11 & 1 & 6 & 7 & 6 & 10 & 12 & 13 & 1 & 4 \\
9 & 7 & 12 & 4 & 5 & 7 & 12 & 3 & 9 & 2 & 7 & 7 & 0 & 15 & 4 & 14 & 5 & 5 & 11 \\
0 & 14 & 13 & 14 & 5 & 14 & 4 & 16 & 5 & 5 & 8 & 6 & 15 & 13 & 10 & 0 & 8 & 3 & 9 \\
10 & 10 & 11 & 7 & 10 & 6 & 6 & 7 & 13 & 4 & 5 & 10 & 4 & 10 & 8 & 16 & 13 & 12 & 7 \\
16 & 9 & 1 & 13 & 11 & 0 & 9 & 13 & 14 & 3 & 16 & 12 & 14 & 0 & 16 & 9 & 2 & 2 & 7 \\
5 & 11 & 4 & 7 & 8 & 10 & 5 & 3 & 7 & 1 & 16 & 13 & 5 & 8 & 13 & 2 & 3 & 5 & 3 \\
2 & 2 & 3 & 4 & 9 & 15 & 2 & 10 & 7 & 10 & 15 & 1 & 5 & 3 & 12 & 2 & 5 & 3 & 11 \\
9 & 13 & 9 & 1 & 7 & 10 & 2 & 3 & 5 & 0 & 2 & 4 & 11 & 9 & 7 & 7 & 3 & 11 & 4
\end{smatrix}$$

\item Symmetric self-dual [40,20,14] code over $GF(17)$ with $(I_{20} \mid A_{17}^{40,1})$ where \\
$$A_{17}^{40,1}= \begin{smatrix}
9 & 12 & 9 & 13 & 3 & 0 & 3 & 0 & 12 & 15 & 16 & 3 & 6 & 15 & 6 & 15 & 13 & 10 & 10 & 2 \\
12 & 9 & 9 & 13 & 5 & 8 & 7 & 7 & 1 & 3 & 10 & 15 & 4 & 11 & 11 & 12 & 3 & 12 & 9 & 7 \\
9 & 9 & 11 & 12 & 7 & 8 & 6 & 9 & 13 & 0 & 9 & 6 & 10 & 0 & 1 & 3 & 12 & 12 & 3 & 3 \\
13 & 13 & 12 & 7 & 15 & 3 & 8 & 13 & 15 & 0 & 7 & 10 & 12 & 0 & 15 & 16 & 11 & 13 & 12 & 4 \\
3 & 5 & 7 & 15 & 5 & 16 & 2 & 8 & 0 & 11 & 16 & 14 & 14 & 13 & 4 & 16 & 14 & 13 & 10 & 9 \\
0 & 8 & 8 & 3 & 16 & 1 & 1 & 9 & 8 & 3 & 4 & 15 & 11 & 5 & 5 & 10 & 11 & 8 & 9 & 7 \\
3 & 7 & 6 & 8 & 2 & 1 & 15 & 0 & 5 & 3 & 12 & 14 & 8 & 16 & 4 & 15 & 1 & 16 & 4 & 1 \\
0 & 7 & 9 & 13 & 8 & 9 & 0 & 12 & 4 & 12 & 0 & 8 & 13 & 12 & 4 & 6 & 9 & 5 & 2 & 2 \\
12 & 1 & 13 & 15 & 0 & 8 & 5 & 4 & 7 & 10 & 16 & 4 & 15 & 5 & 10 & 9 & 0 & 10 & 0 & 1 \\
15 & 3 & 0 & 0 & 11 & 3 & 3 & 12 & 10 & 11 & 7 & 8 & 4 & 3 & 6 & 7 & 2 & 3 & 3 & 11 \\
16 & 10 & 9 & 7 & 16 & 4 & 12 & 0 & 16 & 7 & 11 & 8 & 3 & 16 & 14 & 1 & 14 & 16 & 8 & 3 \\
3 & 15 & 6 & 10 & 14 & 15 & 14 & 8 & 4 & 8 & 8 & 9 & 8 & 16 & 15 & 14 & 0 & 5 & 4 & 10 \\
6 & 4 & 10 & 12 & 14 & 11 & 8 & 13 & 15 & 4 & 3 & 8 & 3 & 8 & 3 & 11 & 14 & 8 & 13 & 3 \\
15 & 11 & 0 & 0 & 13 & 5 & 16 & 12 & 5 & 3 & 16 & 16 & 8 & 11 & 16 & 15 & 2 & 1 & 1 & 0 \\
6 & 11 & 1 & 15 & 4 & 5 & 4 & 4 & 10 & 6 & 14 & 15 & 3 & 16 & 10 & 11 & 2 & 3 & 15 & 8 \\
15 & 12 & 3 & 16 & 16 & 10 & 15 & 6 & 9 & 7 & 1 & 14 & 11 & 15 & 11 & 2 & 4 & 9 & 8 & 13 \\
13 & 3 & 12 & 11 & 14 & 11 & 1 & 9 & 0 & 2 & 14 & 0 & 14 & 2 & 2 & 4 & 2 & 11 & 11 & 2 \\
10 & 12 & 12 & 13 & 13 & 8 & 16 & 5 & 10 & 3 & 16 & 5 & 8 & 1 & 3 & 9 & 11 & 6 & 8 & 7 \\
10 & 9 & 3 & 12 & 10 & 9 & 4 & 2 & 0 & 3 & 8 & 4 & 13 & 1 & 15 & 8 & 11 & 8 & 6 & 15 \\
2 & 7 & 3 & 4 & 9 & 7 & 1 & 2 & 1 & 11 & 3 & 10 & 3 & 0 & 8 & 13 & 2 & 7 & 15 & 15
\end{smatrix}$$

\end{itemize}

\section{\\Quadratic residue codes over various finite fields}\label{Quadratic residue codes of new parameters}

\begin{tabular}{ll}\label{QR codes}

$[32, 16, 14]$ code over $GF(19)$: & $[20, 10, 10]$ code over $GF(23)$: \\

$\begin{smatrix}
18 & 13 & 17 & 11 & 10 & 15 & 15 & 8 & 3 & 12 & 4 & 12 & 0 & 10 & 14 & 18 \\
14 & 7 & 3 & 15 & 4 & 9 & 14 & 17 & 4 & 6 & 13 & 7 & 12 & 12 & 4 & 13 \\
4 & 0 & 15 & 16 & 13 & 1 & 6 & 1 & 5 & 13 & 9 & 3 & 7 & 10 & 13 & 17 \\
13 & 6 & 7 & 5 & 0 & 8 & 15 & 16 & 0 & 1 & 18 & 5 & 3 & 10 & 18 & 11 \\
18 & 7 & 4 & 18 & 15 & 15 & 4 & 4 & 0 & 12 & 5 & 11 & 5 & 13 & 5 & 10 \\
5 & 10 & 17 & 6 & 6 & 16 & 16 & 2 & 8 & 16 & 11 & 2 & 11 & 12 & 0 & 15 \\
0 & 5 & 10 & 17 & 6 & 6 & 16 & 16 & 2 & 8 & 16 & 11 & 2 & 11 & 12 & 15 \\
12 & 15 & 10 & 11 & 11 & 16 & 16 & 15 & 18 & 10 & 17 & 5 & 11 & 15 & 14 & 8 \\
14 & 1 & 5 & 8 & 4 & 10 & 15 & 18 & 11 & 2 & 11 & 1 & 5 & 4 & 9 & 3 \\
9 & 11 & 0 & 1 & 13 & 2 & 8 & 0 & 10 & 17 & 4 & 17 & 1 & 10 & 11 & 12 \\
11 & 18 & 14 & 12 & 5 & 0 & 8 & 15 & 5 & 11 & 11 & 5 & 17 & 5 & 8 & 4 \\
8 & 2 & 15 & 2 & 8 & 18 & 13 & 1 & 10 & 4 & 17 & 10 & 5 & 13 & 7 & 12 \\
7 & 12 & 16 & 14 & 8 & 17 & 8 & 14 & 18 & 2 & 14 & 9 & 10 & 11 & 10 & 0 \\
10 & 10 & 13 & 1 & 9 & 10 & 0 & 4 & 3 & 12 & 0 & 8 & 9 & 5 & 4 & 10 \\
4 & 15 & 18 & 7 & 18 & 6 & 7 & 6 & 11 & 12 & 15 & 9 & 8 & 7 & 6 & 14 \\
6 & 2 & 8 & 9 & 4 & 4 & 11 & 16 & 7 & 15 & 7 & 0 & 9 & 5 & 18 & 1
\end{smatrix},$
&
$\begin{smatrix}
22 & 12 & 2 & 9 & 10 & 15 & 12 & 21 & 13 & 1 \\
13 & 4 & 9 & 0 & 17 & 22 & 20 & 15 & 13 & 11 \\
13 & 18 & 1 & 7 & 8 & 6 & 4 & 0 & 7 & 21 \\
7 & 21 & 4 & 7 & 6 & 18 & 14 & 18 & 1 & 14 \\
1 & 18 & 19 & 18 & 20 & 14 & 6 & 16 & 5 & 13 \\
5 & 10 & 8 & 20 & 14 & 14 & 0 & 16 & 20 & 8 \\
20 & 18 & 16 & 12 & 4 & 13 & 4 & 17 & 9 & 11 \\
9 & 4 & 0 & 4 & 14 & 7 & 20 & 22 & 15 & 2 \\
15 & 13 & 20 & 3 & 15 & 19 & 11 & 4 & 11 & 10 \\
11 & 21 & 14 & 13 & 8 & 11 & 2 & 10 & 22 & 22
\end{smatrix},$
\\&
\\   
$[24, 12, 12]$ code over $GF(29)$: &$[24, 12, 12]$ code over $GF(41)$: \\

$\begin{smatrix}
28 & 18 & 21 & 4 & 14 & 23 & 19 & 7 & 25 & 16 & 19 & 1 \\
19 & 5 & 25 & 3 & 28 & 12 & 10 & 2 & 25 & 11 & 3 & 11 \\
3 & 23 & 0 & 13 & 19 & 17 & 13 & 18 & 14 & 6 & 12 & 8 \\
12 & 19 & 3 & 10 & 19 & 4 & 21 & 16 & 8 & 25 & 10 & 25 \\
10 & 6 & 12 & 21 & 15 & 21 & 17 & 9 & 27 & 22 & 9 & 15 \\
9 & 22 & 20 & 5 & 11 & 11 & 24 & 12 & 16 & 28 & 25 & 6 \\
25 & 23 & 19 & 7 & 3 & 16 & 0 & 23 & 25 & 22 & 17 & 10 \\
17 & 9 & 14 & 9 & 1 & 18 & 12 & 26 & 4 & 14 & 18 & 22 \\
18 & 12 & 8 & 0 & 18 & 22 & 24 & 2 & 11 & 6 & 20 & 4 \\
20 & 6 & 27 & 15 & 10 & 22 & 19 & 0 & 24 & 10 & 3 & 13 \\
3 & 24 & 1 & 15 & 2 & 28 & 23 & 27 & 12 & 5 & 11 & 10 \\
11 & 8 & 25 & 15 & 6 & 10 & 22 & 4 & 13 & 10 & 28 & 28
\end{smatrix},$
&

$\begin{smatrix}
40 & 25 & 28 & 4 & 19 & 33 & 29 & 12 & 37 & 23 & 26 & 40 \\
26 & 5 & 35 & 6 & 2 & 22 & 17 & 4 & 34 & 13 & 3 & 25 \\
3 & 33 & 3 & 23 & 31 & 26 & 17 & 22 & 16 & 6 & 17 & 28 \\
17 & 29 & 8 & 17 & 28 & 3 & 25 & 18 & 8 & 35 & 15 & 4 \\
15 & 11 & 19 & 30 & 19 & 25 & 19 & 9 & 37 & 32 & 14 & 19 \\
14 & 34 & 29 & 4 & 10 & 8 & 29 & 15 & 24 & 2 & 37 & 33 \\
37 & 32 & 23 & 4 & 39 & 19 & 1 & 36 & 40 & 34 & 24 & 29 \\
24 & 11 & 16 & 9 & 40 & 26 & 20 & 0 & 9 & 21 & 25 & 12 \\
25 & 14 & 8 & 39 & 26 & 35 & 39 & 7 & 18 & 8 & 27 & 37 \\
27 & 6 & 37 & 23 & 18 & 37 & 31 & 2 & 33 & 12 & 3 & 23 \\
3 & 34 & 4 & 25 & 7 & 1 & 32 & 36 & 14 & 5 & 16 & 26 \\
16 & 13 & 37 & 22 & 8 & 12 & 29 & 4 & 18 & 15 & 40 & 1
\end{smatrix}.$
\\ 
&
\\   
%
%
%

\end{tabular}






%
%

\end{document}